\newtheorem{theorem}{Theorem}[section]
\newtheorem{Claim}[theorem]{Claim}
\newtheorem{corollary}[theorem]{Corollary}
\begin{document}

\title{{Improved key rate bounds for practical decoy-state quantum key distribution systems}}

\author{Zhen Zhang}
\author{Qi Zhao}
\affiliation{Center for Quantum Information, Institute for Interdisciplinary Information Sciences, Tsinghua University, Beijing, 100084, China}
\author{Mohsen Razavi}
\email{m.razavi@leeds.ac.uk}
\affiliation{School of Electronic and Electrical Engineering, University of Leeds, Leeds, LS2 9JT, UK}
\author{Xiongfeng Ma}
\email{xma@tsinghua.edu.cn}
\affiliation{Center for Quantum Information, Institute for Interdisciplinary Information Sciences, Tsinghua University, Beijing, 100084, China}

%%%%%%%%%%%%%%%%%%%%%%%%%%%%%%%%%%%%%%%%%%%%%%%%%%%%%%%%%%%%%%%%%%%%%%%
% Abstract
%%%%%%%%%%%%%%%%%%%%%%%%%%%%%%%%%%%%%%%%%%%%%%%%%%%%%%%%%%%%%%%%%%%%%%%
\begin{abstract}

The decoy-state scheme is the most widely implemented quantum key distribution {protocol} in practice. In order to account for the finite-size key effects on the achievable secret key generation {rate}, a rigorous statistical fluctuation analysis is required. {Originally}, a heuristic Gaussian-approximation technique was used for this purpose, which, despite of its analytical convenience, was not sufficiently rigorous. The fluctuation analysis has recently been made rigorous by using the Chernoff bound. {There} is a considerable gap, however, between the key rate bounds obtained from these new techniques and that obtained from {the} Gaussian assumption. Here, we develop a tighter bound for the decoy-state method, {which} yields a smaller failure probability. This improvement results in a higher key rate and increases the maximum distance over which {secure} key exchange is possible. By optimizing the system parameters, our simulation results show that our new method almost closes the gap between the two previously proposed techniques and achieves a similar performance to that of conventional Gaussian approximations.

%In order to account for the finite\href{}{}-size key effects on the achievable secret key generation rete, a rigorous statistical fluctuation analysis is required originally, a heuristic Gaussian-approximation technique was used for this purpose, which despite of its analytical convenience was not sufficiently rigorous.

%This is a considerable gap, however, between the key rate bounds obtained from these new techniques and that obtained from Gaussian assumption.
\end{abstract}
\maketitle

\section{Introduction}
In theory, quantum key distribution (QKD) \cite{Bennett:BB84:1984,Ekert:QKD:1991} has been proven to be information-theoretically secure against eavesdropping attacks \cite{mayers2001unconditional,Lo1999Science,PhysRevLett.85.441}, even if we assume that the attacker, Eve, has full control over the channel. The security of QKD stems from the complementary relation of non-commuting measurement operators in quantum mechanics \cite{Koashi:Comp:09}. Due to the uncertainty principle, any Eve's interference that gains {her some} information about the key would inevitably introduce disturbance. The users, Alice and Bob, can then bound the information leakage to Eve by quantifying the disturbance. The latter requires collecting data from which certain parameters of the system, such as bit and phase error probabilities \cite{PhysRevLett.85.441}, can accurately be estimated.

In practice, the required probabilities above cannot be directly measured. Instead, one can only measure the rates, i.e., the frequencies of occurrence. If the QKD system {runs} for an infinitely long time, the rates will converge to the corresponding underlying probabilities. That is, the parameters needed for data postprocessing can be measured accurately when the data size is {sufficiently} large. In reality, there are deviations between rates and probabilities due to \textit{statistical fluctuations}. A finite-key analysis accounts for these deviations and derives a security parameter, the failure probability, for the final key. With the aid of the finite-key analysis, the security of QKD can also be extended to {its} composable security definition \cite{BenOr:Security:05,Renner:Security:05}. The finite-key analysis of QKD systems with idealized single-photon sources and detectors are well studied in the literature \cite{Ma2011Finite}. Here, we develop tight bounds for the secret key rate in practical scenarios when decoy states are in use \cite{PhysRevLett.91.057901,Lo:PRL:2005,PhysRevLett.94.230503}.

A perfect single-photon source is hard to attain in practice. Alternatively, a highly attenuated laser, described by a weak coherent state, is widely used in QKD. The multi-photon components in the coherent state would introduce security loopholes in practice \cite{duvsek1999generalized,Brassard2000PNS}. Such imperfections in realistic devices were originally taken into consideration in the Gottesman-Lo-L\"utkenhaus-Preskill (GLLP) security analysis \cite{gottesman2004security}. By directly applying the GLLP analysis to the coherent-state QKD system, however, the performance, {measured by} key rate and {maximum} secure transmission distance, is rather limited \cite{Ma2006low}. A clever twist to the weak-laser QKD, known as the decoy-state method, is introduced in \cite{PhysRevLett.91.057901,Lo:PRL:2005,PhysRevLett.94.230503}, {which, fortunately, }can enhance system performance {to} a level comparable to that of a perfect single-photon source. {The decoy-state method is now widely used in QKD systems \cite{Zhao2006DecoyExp,zhao2006simulation,Rosenberg:ExpDecoy:2007,Zeilinger:Decoy:2007,Peng:ExpDecoy:2007,YSS_Decoy_07}.}

%{(A)  has a finite set of data. Statistical fluctuation must then be considered, in our security analysis, to account for possible deviatoins from true (probability) values.}

In the decoy-state method, we estimate the channel parameters by sending two types of states. One is called {the} signal state, which is used to transmit keys similar to the single-photon source in the ideal situation. The other is called {the} decoy state, which is used to characterize the channel, by estimating the number of single-photon states traversing the channel. In the information-theoretical security proof of the decoy-state method \cite{Lo:PRL:2005}, these two states have the same properties except for their intensity, which results in distinct Poisson distributions for their photon number. Note that the phases of the coherent states must be randomized, in order that the source can be treated as a statistical mixture of Fock states. In this case, the channel, controlled by Eve, will have the same impact on the single-photon components in both signal and decoy states. The channel parameters, such as the probability of a single photon passing through, defined as the single-photon yield, would then be the same for the signal and decoy states. This property is at the core of the security of the decoy-state technique. We revisit this condition in our finite-key analysis.

Estimating the channel parameters, such as the single-photon yield, would become less accurate when one only has a finite set of data. Statistical fluctuation must then be considered, in our security analysis, to account for possible deviations from true (probability) values. It turns out that the statistical fluctuation analysis for the decoy-state method can be a complicated problem. To simplify the problem, a Gaussian distribution assumption on the channel fluctuations was made in early analyses \cite{XMA:PRA:2005}. Throughout the paper, we refer to this Gaussian approximation technique by the Gaussian analysis method. Such an assumption is not necessarily justified when one considers a rigorous security proof. Lately, this Gaussian assumption was removed from the security proof by applying the Chernoff bound and the Hoeffding inequality \cite{curty2014finite,PhysRevA.89.022307}. We refer to this latter technique by the Chernoff+Hoeffding method.

The simulation results show that a {large-size} key is required to achieve a secure key with the Chernoff+Hoeffding method and the key rate is {lower} than that of the Gaussian analysis method. In this work, we improve the finite-key analysis method and {provide} a tighter estimation {of QKD parameters} by breaking the parameter estimation problem into different regimes of operation and finding tight bounds in each case. After optimizing the system parameters, we show that our improved finite-key analysis method {achieves} a similar performance to the Gaussian analysis method.

The organization of this paper is as follows. In Sec.~\ref{Sec:Security}, we review the commonly used vacuum+weak decoy-state scheme \cite{lo2004quantum,XMA:PRA:2005} and develop a general formulation for its finite-key analysis. In Sec.~\ref{Sec:Statistical}, we present our new statistical fluctuation method, and provide instructions on how our results can be applied to a realistic experimental setup. Note that our proposed method is generic {and} can also be used in other decoy-state QKD schemes. In Sec.~\ref{Sec:Simulation}, we first construct a QKD simulation model with typical experimental parameters, and then compare our new method with previous work when each method has been optimized to offer its best performance. We discuss the results and conclude the paper in Sec.~\ref{Sec:Conclusion}.

\section{Finite-key analysis for vacuum+weak decoy-state scheme}\label{Sec:Security}
In this section, we lay out a precise formulation for our finite-key analysis problem in the special case of vacuum+weak decoy-state protocol. This turns out to offer a unifying language, applicable to both the Chernoff+Hoeffding \cite{curty2014finite,PhysRevA.89.022307} and the Gaussian analysis methods, as well as our own proposed method. We will then compare the new formulation with that of the Gaussian analysis method \cite{XMA:PRA:2005}, and show how the results there can be employed in our finite-key analysis. In particular, we show that the formulation in the Chernoff+Hoeffding method has an equivalent form to that of the Gaussian analysis method. In the following, in Sec.~\ref{section:Definitions}, we review the widely-used scheme of vacuum+weak decoy-state QKD \cite{lo2004quantum}. Then, the definitions and notations used in this paper are given. In Sec.~\ref{Sub:VMOpt}, we formulate the parameter estimation problem in its general form. Finally, in Sec.~\ref{Section:Analytical solution decoy}, we use the results in \cite{XMA:PRA:2005} to find analytical bounds for the parameters of interest.  %This section must start by explaining its stucturem and why we address these issues here in the context of the bisser? problem}}

%In a QKD system, that relies on coherent states, phase randomization is generally assumed to be applied to the coherent state source so that it can be considered as a mixture of Fock states \cite{Lo:PRL:2005}. Thus, Eve cannot perfectly distinguish intensities of every pulse. Hence, the yields of the signal state and decoy state are equal without considering the statistical fluctuations. Base on this, the decoy-state analysis method can accurately estimate the yields of the pulses with different photon-numbers, and obtains a high key rate over a long transmission distance.

\subsection{Vacuum+weak decoy-state protocol}\label{section:Definitions}

The vacuum+weak decoy-state protocol, first presented in 2004 \cite{lo2004quantum}, is a widely used decoy-state scheme. In this protocol, Alice encodes the pulses with three different intensities, {corresponding} to vacuum states, weak decoy states and the signal states. This scheme is capable of estimating the single-photon components because, intuitively, when the intensity of a coherent state pulse is very weak, the resulting detection events mainly come from the single-photon components and background. The yield of the background noise can be estimated by the vacuum decoy state. By combining measurement results of weak decoy and vacuum decoy states, the relevant parameters to the single-photon components, including the yield and quantum bit error rate (QBER), can {accurately be} estimated. With those parameters, secure keys can be obtained from the signal states after postprocessing.

The protocol is described in more detail in the following steps:

%Alice prepares the pulses with three intensities: vacuum state, weak decoy state and signal state. Alice and Bob choose the $X$ and $Z$ basis to encode and measure the pulses with equal probabilities \footnote{Alice and Bob can also chooses basis with different probabilities, which can enhance the key rate \cite{lo2005efficient,Wei2013DecoyBiased}. For the sake of simplicity, we consider the simple case where the two bases are chosen with equal probabilities, but our results should be easily extended to the biased basis choice case.}. The procedures of the vacuum+weak decoy-state protocol are shown as following.

\begin{enumerate}%[\textbf{Step.}~1]
\item
\emph{State preparation}: {For each bit in her raw key}, Alice randomly chooses {the intensity and the basis to encode her bit. She can choose from} three intensities, namely, vacuum state, weak decoy state and signal state, {and} then randomly encode {her bit} in the $X$ or $Z$ basis, and sends {it} to Bob. The probability of choosing the $Z$ basis could, in general, be different from that of the $X$ basis \cite{lo2005efficient}.

\item
\emph{Measurement}: Bob measures the received states in the $X$ {or} $Z$ basis {chosen} randomly. The probability of choosing a measurement basis is the same as that of the encoding stage.

\item
\emph{Sifting}: {Over an authenticated channel}, Alice announces the basis and signal/decoy information she has {used}, while Bob announces the locations of valid detections and the bases {used for} his measurements. {If} Alice and Bob {have chosen} the same basis, they keep the {corresponding} bits as the sifted key.

\item
\emph{Error correction and verification}: Alice calculates {some} parity information of her {sifted} key, encrypts the {parity bits} with pre-shared secure keys, and sends {them} to Bob. Bob then performs the error correction and, Alice and Bob verify if their keys are now identical \cite{Ma2011Finite}. If the verification fails they perform the error correction again or abort the protocol. If the keys are verified to be identical, Bob finds the number of bit errors and evaluates the QBER.

\item
\emph{Parameter estimation}: Using the parameters obtained in the experiment, a lower bound on the number of successful detection events resulted from {single-photon} components of the signal states, {$M_1^s$}, and an upper bound on the {corresponding} phase error rate, $e_1^{ps}$, will be obtained in each basis. The latter quantifies the leaked information to a potential eavesdropper.

\item
\emph{Privacy amplification}: Alice and Bob apply universal hashing function based on the parameters {$M_1^s$} and $e_1^{ps}$ in each basis. Then, according to the GLLP analysis \cite{gottesman2004security}, a shorter but more secure key can be extracted with a length of {$M_1^s[1-h(e_1^{ps})]$}.
\end{enumerate}

The final key length in each basis is then lower bounded by
\begin{equation} \label{VM:KeyBit}
\begin{aligned}
K &\ge M_1^{s}[1-h(e_1^{ps})]-K_{ec},\\
K_{ec} &= M^{s}fh(E^{s}),\\
\end{aligned}
\end{equation}
where $f$ denotes the inefficiency of error correction, and $h(x)=-x \log_2 x -(1-x) \log_2 (1-x)$ is the Shannon binary entropy function. Here, for the sake of simplicity, we assume that Alice and Bob only extract secure keys from the signal states. In principle, they can also extract secure keys from the decoy states as well. The other parameters in Eq.~\eqref{VM:KeyBit} are defined below.

Below, the notation used throughout the paper, including the parameters in Eq.~\eqref{VM:KeyBit}, is presented.

\begin{enumerate}
\item
The superscripts $x$ and $z$ denote the $X$ and $Z$ bases, respectively. For brevity of notation, we often do not explicitly mention the basis superscript, unless otherwise needed. All parameters defined below are then for a certain fixed basis $\gamma = x,z$, although the superscript $\gamma$ is not shown.

\item
Capital letters $K$, $N$, and $M$, respectively, denote the number of the final key bits, the pulses sent by Alice and the valid, after basis sifting, detections on Bob's side.

\item
 $Q$ denotes the gain, i.e., the rate of creating a sifted key bit, and $E$ denotes the total QBER in the sifted key bit.

\item
$Y_i$ denotes the yield of $i$-photon states, and is given by $Y_i\equiv M_i/N_i$, where the subscript $i$ for $M$ and $N$ {refers to the corresponding} counts for $i$-photon states.

\item
$e_i$ denotes the error rate corresponding to the transmission of $i$-photon states. Note that it should not be confused with the letter $e$ without the subscript, which is the base of the natural logarithm.

\item
The superscripts $s$, $w$ and $v$, respectively, denote the signal state with intensity $\mu$, weak decoy state with intensity $\nu$ ($<\mu$){, and} vacuum state. The superscript/subscript $a$ denotes these three cases, i.e., $a\in\{s,w,v\}$, with corresponding intensity $\mu_a\in\{\mu,\nu,0\}$.
\item
The superscripts $b$ and $p$ {refer to} bit and phase {(in error-rate terms)}, respectively.

\item
The superscripts $L$ and $U$ refer to the lower bound and the upper bound, respectively.

\item
$q^{a}\equiv N^a/N$ denotes the rate Alice encodes a state with intensity $\mu_a$.

\item
On Alice's side, $p_i^{a}$ denotes the conditional probability that an $i$-photon state corresponds to a coherent pulse with intensity $\mu_a$, {i.e.,}
\begin{equation} \label{VW:pimua}
\begin{aligned}
p_i^{a}&\approx\frac{N_i^{a}}{N_i},\\
\end{aligned}
\end{equation}
where the approximation is caused by statistical {fluctuations. The approximation} becomes equality in the asymptotic (infinite-key) {limit}. Due to the Poisson {distribution} of the photon numbers in different states and $N^a=q^aN$, these probabilities are given by,
\begin{equation} \label{VW:piswv0}
\begin{aligned}
p_i^{a}&=\frac{N^a e^{-\mu_a}(\mu_a)^i/i!}{\sum_{\alpha\in\{s,w,v\}}N^\alpha e^{-\mu_\alpha}(\mu_\alpha)^i/i!},\\
&=\frac{q^a e^{-\mu_a}(\mu_a)^i/i!}{\sum_{\alpha\in\{s,w,v\}}q^\alpha e^{-\mu_\alpha}(\mu_\alpha)^i/i!}.\\
\end{aligned}
\end{equation}
{Note that $p_i^{a}$ is the only probability term used in this paper. All other terms are rates, i.e., the ratio between two counts. }

%\begin{equation} \label{VW:piswv}
%\begin{aligned}
%p_i^s &= \frac{q^s e^{-\mu}\mu^i/i!}{q^s e^{-\mu}\mu^i/i!+q^w e^{-\nu}\nu^i/i!+q^v \delta_i},\\
%p_i^w &= \frac{q^w e^{-\nu}\nu^i/i!}{q^s e^{-\mu}\mu^i/i!+q^w e^{-\nu}\nu^i/i!+q^v \delta_i},\\
%p_i^v &= \frac{q^v \delta_i}{q^s e^{-\mu}+q^w e^{-\nu}+q^v}, \\
%\end{aligned}
%\end{equation}
%where $\delta_i=1$ when $i=0$ and $\delta_i=0$ when $i\ge1$.

\end{enumerate}

\subsection{Statistical fluctuation analysis: Formulation}
\label{Sub:VMOpt}
Our key objective in the statistical fluctuation analysis of the decoy-state schemes is to bound $M_1^s$ and $e_1^{ps}$, by allowing a certain failure rate, by using the measurement results obtained in a QKD round. A QKD round consists of transmitting $N$ pulses by Alice, out of which $K$ key bits are to be extracted. In this subsection and next, all the terms refer to the parameters in a particular basis, e.g., the $Z$ basis. The same results hold for the other basis as well. In each QKD round, Alice and Bob can specify $M^a$ and $E^{a}M^{a}$ for different values of $a$. Based on these measurement results, they consider a worst-case scenario by finding the minimum value of $M_1^s$ and the maximum value of $e_1^{ps}$ that is consistent with the measurement results.

From the GLLP security analysis \cite{gottesman2004security}, Eve cannot get any key information from the single-photon states without introducing disturbance, while she can in principle get information about the key when multiple photons are sent, say, via photon-number-splitting attacks \cite{duvsek1999generalized,Brassard2000PNS}. Eve's objective is then to minimize $M_1^s$, within the constraints of the decoy-state scheme.

Note that some parameters, such as $N_i$ and $M_i$ are, in principle, known to Eve assuming that she can perform non-demolition measurements on the signals generated by Alice. From Alice and Bob's perspective, these variables are, however, unknown, but have a fixed value in each round of the QKD protocol once Bob's measurements are completed. On the other hand, the choice of $a$ for each transmitted state is known to Alice, while Eve has no information about that before the sifting stage. This is the key advantage that Alice and Bob have over Eve in specifying the range of values that the key parameters of interest would take. In the following, we will try to find relationships between the measurable parameters $M^a$ and $E^{a}M^{a}$ and the unknown (to Alice and Bob), but fixed, parameters $M_i$. We will then show how this can help us bound $M_1^s$ and $e_1^{ps}$.

For phase-randomized coherent sources, the state prepared by Alice can be considered as a mixture of Fock states. The channel, controlled by Eve, behaves the same to different Fock states. This is called {the} photon number channel model \cite{Ma2008PhD}. For an $i$-photon state, the conditional detection probability for Bob that the originally encoded state has an intensity $\mu_a$ is the same as the probability chosen by Alice, $p_i^{a}$, defined in Eq.~\eqref{VW:pimua}. This implies that
\begin{equation} \label{VW:MimuaEM}
\begin{aligned}
M_i^{a}&\approx p_i^{a}M_i, \\
e_i^aM_i^{a}&\approx p_i^{a}e_iM_i, \\
\end{aligned}
\end{equation}
where the approximation becomes equality in the asymptotic case. %Strictly speaking, $e_i^{a}M_i^a$, denotes the number of errors detected, should be considered as a variable.

The total number of detection events caused by the state $a$, $M^{a}$, and the number of errors, $E^{a}M^{a}$, are given by contributions from states with different numbers of photons, that is
 \begin{equation} \label{VM:MmuaEM}
\begin{aligned}
M^{a}&=\sum_i M_i^{a} \\
E^{a}M^{a}&=\sum_i e_i^{a}M_i^{a}. \\
\end{aligned}
\end{equation}
Therefore, by substituting Eq.~\eqref{VW:MimuaEM} into Eq.~\eqref{VM:MmuaEM}, we obtain
\begin{equation} \label{VM:EMlinearEq}
\begin{aligned}
M^s&\approx p_0^sM_0+\dots+p_i^sM_i +\dots,\\
M^w&\approx p_0^wM_0+\dots+p_i^wM_i +\dots,\\
M^v&\approx p_0^vM_0,\\
E^s M^s&\approx p_0^s e_0M_0+\dots+p_i^se_iM_i +\dots,\\
E^w M^w&\approx p_0^w e_0M_0+\dots+p_i^we_iM_i +\dots,\\
E^v M^v&\approx p_0^ve_0M_0, \\
\end{aligned}
\end{equation}
where the approximation becomes equality in the asymptotic case. Note that the terms on the {left hand side} of Eq.~\eqref{VM:EMlinearEq} are measurable counts, while the ones on the {right hand side} are mixed with probabilities. When the data size is finite, the statistical fluctuation may lead to deviations between $M_i^{a}$ ($e_i^{a}M_i^{a}$) and $p_i^{a}M_i$ ($p_i^{a}e_iM_i$), in Eq.~\eqref{VW:MimuaEM}, and {similarly} in Eq.~\eqref{VM:EMlinearEq}. Our objective is to bound these deviations while meeting a certain failure rate for the protocol, as we show next.

The key idea that we use to bound the right-hand side of Eq.~\eqref{VM:EMlinearEq} is to use the fact that Eve does not know the type of the states used by Alice. While Eve can control the values of $M_i$, for $i = 0, 1, 2, \dots$, she cannot change them after Bob's measurements. Nevertheless, even for fixed values of $M_i$, she cannot exactly predict the measurement results $M^a$ and $E^{a}M^{a}$. That is, before the sifting stage, these variables can be considered to be random. It turns out, however, that the expectation value of these random variables, as we show next, can be written as a weighted sum of $M_i$s. That is, after Bob's measurements, Eve can no longer change these mean values either. From Alice and Bob's point of view, a set of observed values for $M^a$ and $E^{a}M^{a}$ would correspond to a fixed, but unknown, set of values for $M_i$. Using proper techniques, they can then bound the above expectation values as a function of the observed values.

Let us first look at $M_i^a$ in a more detailed way. Before the sifting stage, but after Bob's measurements, $M_i$ has a fixed value, but $M_i^a$ is random to Eve. We can then rewrite $M_i^a$ as follows
\begin{equation}
\label{M_ia}
M_i^a = \sum_{j=1}^{M_i} {\chi_{i,j}^a},
\end{equation}
where
\begin{equation}
\chi_{i,j}^a = \left\{ \begin{array}{lr}
1& \mbox{with probability\ } p_i^a \\
0& \mbox{with probability\ } 1-p_i^a
\end{array}
\right.,
\quad\mbox{$j=1,\ldots,M_i$,}
\end{equation}
are independent and identically distributed indicator random variables. It will then follow that
\begin{equation} \label{VM:ExpectMmua}
\begin{aligned}
\mathbb{E}[M_i^{a}]&=p_i^{a}M_i,  \\
\mathbb{E}[e_i^{a}M_i^{a}]&=p_i^{a}e_iM_i, \\
\end{aligned}
\end{equation}
where $\mathbb{E}[\cdot]$ is the expectation value with respect to $\chi_{i,j}^a$ variables. Finally, from Eqs.~\eqref{VM:MmuaEM} and \eqref{VM:ExpectMmua} we find
\begin{equation} \label{VW:EME}
\begin{aligned}
\mathbb{E}[M^s]&=p_0^sM_0+\dots+p_i^sM_i +\dots, \\
\mathbb{E}[M^w]&=p_0^wM_0+\dots+p_i^wM_i +\dots, \\
\mathbb{E}[M^v]&=p_0^vM_0, \\
\mathbb{E}[E^sM^s]&=p_0^se_0M_0+\dots+p_i^se_iM_i +\dots, \\
\mathbb{E}[E^wM^w]&=p_0^we_0M_0+\dots+p_i^we_iM_i +\dots, \\
\mathbb{E}[E^vM^v]&=p_0^ve_0M_0, \\
\end{aligned}
\end{equation}
where, again, the expectation values are taken with respect to $\chi_{i,j}^a$ variables. Note that these expectation values would represent the average values for our observables from Eve's perspective before the sifting stage, but after Bob's measurements. At this stage, Alice and Bob can safely assume that Eve can no longer change the values of $M_i$ variables on the right-hand side of the above equations. The measured values for $M_a$ and $E^a M^a$ will then set some constraints on the expectation values in Eq.~\eqref{VW:EME}, and, correspondingly, the right-hand side of the above equations. In particular, we can show that for any set of values for observables $M_a$ ($E^a M^a$), we can find lower and upper bounds for their corresponding expected values, respectively, denoted by $\mathbb{E}^L[M^a]$ ($\mathbb{E}^L[E^aM^a]$) and $\mathbb{E}^U[M^a]$ ($\mathbb{E}^U[E^aM^a]$). Our finite-key analysis can then be formulated as the following optimization problem: Find
%In the decoy-state analysis, one needs to bound the secure key length given by Eq.~\eqref{VM:KeyBit}. We follow the analysis in literature \cite{XMA:PRA:2005}, by lower bounding $M_1$ and upper bounding $e_1$, which transforms it to an optimization problem,
\begin{equation} \label{VW:MEMeqs}
\begin{aligned}
\min{~M_1}, \;\;\; {\rm s.t.,}\\
\mathbb{E}^L[M^s]&\le p_0^sM_0+\dots+p_i^sM_i +\dots\le \mathbb{E}^U[M^s]\\
\mathbb{E}^L[M^w]&\le p_0^wM_0+\dots+p_i^wM_i +\dots\le \mathbb{E}^U[M^w]\\
\mathbb{E}^L[M^v]&\le p_0^vM_0\le\mathbb{E}^U[M^v] \mbox{ and}\\
\max{~e_1M_1}, \;\;\; {\rm s.t.,}\\
\mathbb{E}^L[E^sM^s]&\le p_0^se_0M_0+\dots+p_i^se_iM_i +\dots\le \mathbb{E}^U[E^sM^s]\\
\mathbb{E}^L[E^wM^w]&\le p_0^we_0M_0+\dots+p_i^we_iM_i +\dots\le \mathbb{E}^U[E^wM^w]\\
\mathbb{E}^L[E^vM^v]&\le p_0^ve_0M_0\le \mathbb{E}^U[E^vM^v].\\
\end{aligned}
\end{equation}
%In the experiments, one can obtain the values of $M^{a}$ and $E^{a}M^{a}$ directly. Then, their expectation values $\mathbb{E}(M^{a})$ and $\mathbb{E}(E^{a}M^{a})$ can be bounded by the Chernoff bound. Details are shown in Sec.~\ref{Sec:Statistical}.
In Sec.~\ref{Sec:Statistical}, starting with the Chernoff bound, we show how the required lower and upper bounds above can be related to the measured observables. Before doing that, however, let us find the correspondence between the above formulation and that of the previous work in \cite{XMA:PRA:2005}.

\subsection{Correspondence with Gaussian Analysis Method} \label{Section:Analytical solution decoy}
In order to compare our formulation in sec.~\ref{Sub:VMOpt} with that of the Gaussian analysis method proposed in \cite{XMA:PRA:2005}, we rewrite Eq.~\eqref{VW:EME} by dividing both sides of it by $N^a$. We obtain the following
\begin{equation} \label{QKD-ratenew:signal}
\begin{aligned}
\mathbb{E}[Q^{a}]&=\mathbb{E}[\frac{M^{a}}{N^{a}}]=\frac{\mathbb{E}[M^{a}]}{N^{a}}\\
&=\sum_{i=0}^\infty{p_i^{a}\frac{M_i}{N^{a}}}\\
%&=p_0^{a}\frac{M_0}{N^{a}}+\dots+p_i^{a}\frac{M_i}{N^{a}} +\dots\\
&=\sum_{i=0}^\infty \frac{e^{-\mu_a}(\mu_a)^i/i!q^{a}}{e^{-\mu}\mu^i/i!q^s+e^{-\nu}\nu^i/i!q^w}\frac{M_i}{q^{a}N}\\
%&=\frac{e^{-\mu_a}q^{a}}{e^{-\mu}q^s+e^{-\nu}q^w+q^v}\frac{M_0}{q^{a}N}+\dots+\frac{e^{-\mu_a}(\mu_a)^i/i!q^{a}}{e^{-\mu}\mu^i/i!q^s+e^{-\nu}\nu^i/i!q^w}\frac{M_i}{q^{a}N}+\dots\\
%&=e^{-\mu_a}\frac{M_0}{(e^{-\mu}q^s+e^{-\nu}q^w+q^v)N}+\dots+e^{-\mu_a}\frac{(\mu_a)^i}{i!}\frac{M_i}{(e^{-\mu}\mu^i/i!q^s+e^{-\nu}\nu^i/i!q^w)N}+\dots\\
&=\sum_{i=0}^\infty e^{-\mu_a}\frac{(\mu_a)^i}{i!}Y_i^\ast, \\
%&=e^{-\mu_a}Y_0^\ast+\dots+e^{-\mu_a}\frac{(\mu_a)^i}{i!}Y_i^\ast +\dots, \\
\mathbb{E}[E^aQ^{a}]&= \sum_{i=0}^\infty e^{-\mu_a}\frac{(\mu_a)^i}{i!} e_i Y_i^\ast.
%\mathbb{E}[E^aQ^{a}]&=e^{-\mu_a}e_0Y_0^\ast+\dots+e^{-\mu_a}\frac{(\mu_a)^i}{i!} e_i Y_i^\ast +\dots,
\end{aligned}
\end{equation}
Here we implicitly assume that, to her advantage, $N^a$ is known to Eve, and
\begin{equation} \label{def:yi:ei}
\begin{aligned}
Y_i^\ast &=\frac{M_i}{N_i^{\infty}},\\
e_iY_i^\ast &=\frac{e_iM_i}{N_i^{\infty}},\\
\end{aligned}
\end{equation}
where
 \begin{equation} \label{def:yi:ei11}
\begin{aligned}
N_i^{\infty}=\frac{e^{-\mu}\mu^iq^s+e^{-\nu}\nu^iq^w+q^v0^i}{i!}N
\end{aligned}
\end{equation}
is the asymptotic limit of $N_i$ when $N \rightarrow \infty$. Alternatively, we can think of $N_i^\infty$ as the expected number of $i$-photon states sent by Alice. Note that $e_iY_i^\ast$ should be regarded as one variable. Equation~\eqref{QKD-ratenew:signal} can be expanded as follows
\begin{equation} \label{VW:QYEe}
\begin{aligned}
\mathbb{E}[Q^s]&=e^{-\mu}Y_0^\ast+\mu e^{-\mu}Y_1^\ast+\frac{\mu^2e^{-\mu}}{2!}Y_2^\ast+\dots+\frac{\mu^ie^{-\mu}}{i!}Y_i^\ast+\dots\\
\mathbb{E}[Q^w]&=e^{-\nu}Y_0^\ast+\nu e^{-\nu}Y_1^\ast+\frac{\nu^2e^{-\nu}}{2!}Y_2^\ast+\dots+\frac{\nu^ie^{-\nu}}{i!}Y_i^\ast +\dots\\
\mathbb{E}[Q^v]&=Y_0^\ast\\
\mathbb{E}[E^sQ^s]&=e^{-\mu}e_0Y_0^\ast+\mu e^{-\mu}e_1Y_1^\ast+\frac{\mu^2e^{-\mu}}{2!}e_2Y_2^\ast+\dots+\frac{\mu^ie^{-\mu}}{i!}e_iY_i^\ast +\dots\\
\mathbb{E}[E^wQ^w]&=e^{-\nu}e_0Y_0^\ast+\nu e^{-\nu}e_1Y_1^\ast+\frac{\nu^2e^{-\nu}}{2!}e_2Y_2^\ast+\dots+\frac{\nu^ie^{-\nu}}{i!}e_iY_i^\ast +\dots\\
\mathbb{E}([E^vQ^v]&=e_0Y_0^\ast.\\
\end{aligned}
\end{equation}

In order to find the bounds of $M_1$ and $e_1 M_1$ in our original problem, we find the corresponding bounds for $Y_1^\ast$ and $e_1Y_1^\ast$ by calculating $\mu^2e^{\nu}\mathbb{E}[Q^w]-\nu^2e^{\mu}\mathbb{E}[Q^s]$ to obtain
\begin{equation} \label{VW:Yieibd}
\begin{aligned}
Y_1^\ast &\ge Y_1^{\ast L}=\frac{\mu}{\mu\nu-\nu^2}\left( \mathbb{E}^L[Q^w]e^\nu-\mathbb{E}^U[Q^s]e^\mu\frac{\nu^2}{\mu^2}-\frac{\mu^2-\nu^2}{\mu^2}\mathbb{E}^U[Q^v]\right),\\
e_1Y_1^\ast &\le (e_1Y_1^{\ast})^U=\frac{\mathbb{E}^U[E^wQ^w]-\mathbb{E}^L[E^vQ^v]e^{-\nu}}{\nu e^{-\nu}},
\end{aligned}
\end{equation}
which results in the following
\begin{equation} \label{result:yi:ei}
\begin{aligned}
&M_1^L=Y_1^{\ast L}N(e^{-\mu}\mu q^s+e^{-\nu}\nu q^w),\\
&(e_1M_1)^U=(e_1Y_1^\ast)^UN(e^{-\mu}\mu q^s+e^{-\nu}\nu q^w),\\
&e_1^U= \frac{(e_1M_1)^U}{M_1^L}=\frac{(e_1Y_1^\ast)^U}{Y_1^{\ast L}}=\frac{\mathbb{E}^U[E^wQ^w]e^{\nu}-\mathbb{E}^L[E^vQ^v]}{Y_1^{\ast L} \nu }.
\end{aligned}
\end{equation}

The interesting point about Eqs.~\eqref{QKD-ratenew:signal} and \eqref{VW:QYEe} is that, by some simple substitutions, they have the same form as Eq.~(13) in \cite{XMA:PRA:2005}. In fact, by replacing $\mathbb{E}[Q^a]$ ($\mathbb{E}[E^a Q^a]$) and $Y_i^\ast$ in Eq.~\eqref{QKD-ratenew:signal} with $Q_{\nu_m}$ ($E_{\nu_m} Q_{\nu_m}$) and $Y_i$, we reach to the same result as in Eq.~(13) in \cite{XMA:PRA:2005}. Note that the definitions for $Q$ and $Y$ terms here, in our finite-key analysis, are slightly different from the definitions given in \cite{XMA:PRA:2005} for the infinite-key scenario. Nevertheless, the equations look similar, and one can use the analytical results obtained in \cite{XMA:PRA:2005}, after necessary substitution, and recycle them here. For instance, the bounds obtained in Eq.~\eqref{VW:Yieibd} can directly be obtained from Eqs.~(34) and (37) in \cite{XMA:PRA:2005}.

Thus far, we have shown that the formulation that we need in either the finite-key analysis here and in \cite{curty2014finite}, or the infinitely-long key case in \cite{XMA:PRA:2005} will both result to solving a similar optimization problem. That is, once one specifies, in our formulation, the values of $\mathbb{E}^L[M^a]$, $\mathbb{E}^U[M^a]$, $\mathbb{E}^L[E^a M^a]$, and $\mathbb{E}^U[E^a M^a]$ in Eq.~\eqref{VW:MEMeqs} (or the corresponding values in other formulations), all optimization problems would result in an identical key rate estimation. The key difference would be in their estimated failure probability. The latter is a function of how we estimate the lower and upper bounds of the average terms that we need in Eq.~\eqref{VW:MEMeqs} as a function of our observations. In \cite{XMA:PRA:2005}, the authors use a heuristic Gaussian assumption, which is not exact but convenient to use. In \cite{curty2014finite}, the required bounds are obtained by using Chernoff and Hoeffding inequalities, which are rigorous but a bit too loose in certain regions. In our work, we obtain tighter bounds for these average terms, which, not only  are rigorous, but also offer higher key rates and/or lower failure probabilities as compared to the Chernoff+Hoeffding method.

%Then the expectations, $\mathbb{E}[Q]$ ($\mathbb{E}([EQ)$)'s, can be estimated with both Gaussian analysis method and our new method. When all of estimated values of expectations are fixed, the bounds of $Y_1^\ast$ and $e_1$ are also fixed. However, the failure probabilities of those two analysis are different. Hence, one can also see that all the formulas used in the Gaussian analysis method can be directly employed here. The only difference is the calculation of the failure probability $\varepsilon$.

\section{Statistical fluctuation analysis}\label{Sec:Statistical}
In this section, we first provide a step-by-step instruction on how to use our theoretical results in a real experimental setup. We then summarize all the tools that we have developed in our statistical fluctuation analysis. The full derivations for each of these tools will appear in Appendixes \ref{App:fromX2E} and \ref{APP:theta}.

\subsection{Instructions for experimentalists}\label{App:Parameter}
Suppose we run a QKD experiment according to the decoy-state scheme, as formulated here. After sifting and error correction, we will then have certain observables, namely, $M^{az}$ and $E^{az}$. The next step in the procedure is to apply sufficient privacy amplification that guarantees a failure probability below a given threshold $\varepsilon$. In the privacy amplification procedure, the length of the extracted secure key and hence, the size of the corresponding universal hashing function are determined by $M_1^{sz}$ and $e_1^{psz}$. Thus we need to estimate these two parameters before performing privacy amplification. Note that it is common to estimate the phase error rate $e_1^{psz}$ by using the observed bit error rate $e_1^{bsx}$ in its complement basis \cite{PhysRevLett.85.441}. One should, however, account for deviations from the bit error rate value once finite-key issues are considered \cite{Ma2011Finite}, as we do here. In this section, we only calculate the length of the secure key, $K^z$, extracted from the $Z$-basis measurements. The key length extracted from the $X$ basis, $K^x$, can be obtained similarly and the final key length is given by $K^z+K^x$. We assume that all the secure key bits come from the signal states. The final key length, $K^z$, is given by
\begin{equation} \label{VM:KeyBitZ}
\begin{aligned}
K^z &\ge M_{1}^{szL}[1-h(e_{1}^{pszU})]-K_{ec}^{sz},\\
K_{ec}^{sz}&=M^{sz}fh(E^{sz}),\\
\end{aligned}
\end{equation}
where the lower bound $M_1^{szL}$ and the upper bound $e_1^{pszU}$ can be found by taking the following steps:
\begin{enumerate}
\item Calculate ${K_{ec}^{sz}}:\\$
The parameters $M^{sz}$ and $E^{sz}$ can be directly obtained in the experiment. The cost of error correction is $K_{ec}^{sz}=M^{sz}fh(E^{sz})$.

\item Calculate ${M_1^{zL}}$ and ${e_{1}^{bxU}}$:\\
Use the results of Sec.~\ref{Sec:XtoE} to calculate the upper and lower bounds of all the average terms in Eq.~\eqref{VW:MEMeqs}, i.e., $\mathbb{E}^L[M^a]$, $\mathbb{E}^U[M^a]$, $\mathbb{E}^L[E^aM^a]$, and $\mathbb{E}^U[E^aM^a]$ for each basis. Then use $\mathbb{E}[Q^{a}]=\mathbb{E}[M^{a}]/N^{a}$ and $\mathbb{E}[E^aQ^{a}]$=$\mathbb{E}[E^aM^{a}]/N^{a}$ to calculate the corresponding $Q$ and $EQ$ parameters.
Then, use Eqs.~\eqref{VW:Yieibd} and \eqref{result:yi:ei} to calculate ${M_1^{zL}}$ and ${e_{1}^{bxU}}$.

\item Calculate ${M_{1}^{szL}}:\\$
Use Eq.~\eqref{simplier standard lower bound} in Sec.~\ref{Section:EtoX} to calculate $M_{1}^{szL} = \chi^L$ for $\bar \chi = p_1^s M_1^{zL}$.

%Using the analytic method in Sec.~\ref{Section:Analytical solution decoy}, we can get the lower bound $M_1^{zL}$. From Eq.~\eqref{VW:piswv0}, the probability $p_1^s$ that a single-photon state is encoded in the signal state can be calculated. The expectation value $\mathbb{E}(M_{1}^{szL})$ is equal to $p_1^sM_1^{zL}$ and hence, the lower bound $M_{1}^{szL}$ can be obtained with the Chernoff bound (Sec.\ref{Section:EtoX}) in Eq.~\eqref{simplier standard lower bound}.

\item Calculate ${e_{1}^{pszU}}:\\$
Use Eq.~\eqref{upper bound Random sampling} to find ${e_{1}^{pszU}}$. In Appendix \ref{APP:theta}, we use the random sampling method to account for the deviation, $\theta$, between $e_{1}^{bx}$ and $e_{1}^{psz}$ caused by the finite-key setting in our problem. The upper bound on $e_{1}^{bx}$ has already obtained in Step 2. By upper bounding $\theta$ as explained in Appendix \ref{APP:theta}, we can find ${e_{1}^{pszU}}$. This will specify the required amount of privacy amplification in the protocol.
%The lower bound of $M_1^{xL}$ and the upper bound of $e_{1}^{bx}$ can be obtained with the analytic method shown in Sec.~\ref{Section:Analytical solution decoy}. We know that $e_{1}^{bx}$ is equal to $e_{1}^{psz}$ in infinite key length scenario. In the finite key length scenario, the random sampling method in Eq.~\eqref{upper bound Random sampling} is applied to get the upper bound  $e_{1}^{pszU}$. %, where the crucial parameters, $M_1^{Lx}$, $e_{1}^{bx}$ and $M_{1}^{Lsz}$ have obtained in previous calculation.

\end{enumerate}

%In this section, we mainly consider how statistical fluctuations affect crucial parameters $M_1$ and $e_1$ that determine the length of final keys. This problem is solved by applying the large deviation theory (the Chernoff bound method). When estimating the phase error rate in the $Z$ ($X$) basis, Alice and Bob first calculate the bit error rate of the $X$ ($Z$) basis. Then, with the random sampling method, the phase error rate in the $Z$ ($X$) basis can be bounded.

%First, we estimate the bounds, $\mathbb{E}(\cdot)^U$ and $\mathbb{E}(\cdot)^L$, used in Eq.~\eqref{VW:MEMeqs} from experimental data with the Chernoff bound. Then we can bound $\mathbb{E}(Q)$'s and $\mathbb{E}(EQ)$'s in Eq.~\eqref{VW:Yieibd} with the measurement values $M$'s and $EM$'s. Next, we study how to bound the experiment result using the expectation values. Then we can bound $M_1^{s}$ from $M_1$ in Eq.~\eqref{VM:KeyBit}. Finally, we apply the random sampling method to bound the phase error rate, $e_{1}^{pzs}$ and $e_{1}^{pxs}$, from the bit error rates, $e_{1}^{bzs}$ and $e_{1}^{bxs}$.

\subsection{Methodology: Key ideas}\label{Sec:chernoff}

The first nontrivial step in our instruction list, given in Sec.~\ref{App:Parameter}, is to calculate lower and upper bounds for all the average terms of interest. The key idea to solve this problem, in our case, is to use the Chernoff bound with an inverse formulation. To make this point clear, in this section, we first review the Chernoff bound in the special case of Bernoulli random variables and show that why it is relevant to our problem. Then, by rewriting the Chernoff bound, we find proper candidates for upper and lower bounds of the relevant average terms. In the end, we comment on the differences between our approach and that of \cite{curty2014finite}.

The Chernoff bound for a set of $n$ independent Bernoulli random variables $\chi_i\in\{0,1\}$ can be expressed as follows \cite{bound1,bound2}. For $\chi=\sum_{i=1}^n \chi_i$ and $\bar\chi=\mathbb{E}[\chi]$, we have the following bounds
\begin{equation} \label{Chernoff bound}
\begin{aligned}
\Pr[\chi >(1+\delta^L) \bar\chi ]<\left[\frac{e^{\delta^L}}{(1+\delta^L)^{1+\delta^L}}\right]^{\bar\chi} = g(\delta^L,\bar\chi),
\end{aligned}
\end{equation}
and
\begin{equation} \label{App:EXup}
\begin{aligned}
\Pr[\chi<(1-\delta^U)\bar\chi]<\left[\frac{e^{-\delta^U}}{(1-\delta^U)^{1-\delta^U}}\right]^{\bar\chi} = g(-\delta^U,\bar\chi),
\end{aligned}
\end{equation}
where $\delta^L>0$, $0<\delta^U <1$, and $g(\delta, \bar\chi) = \left[\frac{e^{\delta}}{(1+\delta)^{1+\delta}}\right]^{\bar\chi}$.

The above formulation can be applied to $M^a$ and $E^a M^a$, whose average values need to be bounded. For instance, in the data postprocessing step, the total number of detections obtained by Bob in the $Z$ basis is given by $M^z$. For each valid detection event, we can define the indicator random variable $\chi_j$ that determines whether or not Alice has originally prepared the $j$th received pulse in the {\em signal} state. That is, $\chi_j=1$ means that a signal state has caused the $j$th detection event, whereas $\chi_j=0$ implies that another state (weak decoy or vacuum state) has been used. Then, the total number of detected {\em signal} states is given by $M^{sz}=\sum_{j=1}^{M^z} \chi_j$, with $\chi_j$ being independent Bernoulli random variables. A similar formulation can be used for error terms as well. In the rest of this section, the parameter $\chi$ will then represent any of the parameters of interest in the form $M^a$ and $E^a M^a$ in a particular basis.

The Chernoff bound in Eqs.~\eqref{Chernoff bound} and \eqref{App:EXup} bounds the probability that the observed value deviates from its average value. That is, if we know the average value of $\chi$, we can define a confidence interval $[\chi^U,\chi^L]$, where $\chi^L = (1+\delta^L) \bar \chi$ and $\chi^U = (1-\delta^U) \bar \chi$, the probability of being outside of which is bounded by functions of $\delta^L$, $\delta^U$, and $\bar\chi$. The problem that we have in hand is, however, the opposite. We need to bound $\bar\chi$ for a given observed value of $\chi$ in such a way that the failure probability is below a certain threshold.

To define the failure probability precisely, we use the same framework that we developed in Sec.~\ref{Sub:VMOpt} in which we showed that after the measurement phase, $\bar \chi$ is fixed, but unknown. Nevertheless, even for a fixed $\bar \chi$, the value $\chi$ that Alice and Bob observe in their experiment is a random variable. The failure probability in this setting can then be defined as follows. For a fixed but unknown value of $\bar \chi$, we find the probability that the observed value for $\chi$ results in either of the following events:
\begin{equation}
\mbox{Event 1: } \bar \chi < \mathbb{E}^L(\chi),
\end{equation}
where $\mathbb{E}^L(\chi)$ is the procedure/function by which we relate an observed value to the lower limit on $\bar \chi$, and
\begin{equation}
\mbox{Event 2: } \bar \chi > \mathbb{E}^U(\chi),
\end{equation}
where $\mathbb{E}^U(\chi)$ is the procedure/function by which we relate an observed value to the upper limit on $\bar \chi$. For instance, the probability of failure corresponding to Event 1 is given by
\begin{equation}
\Pr[\mbox{Event 1}] = \Pr[\bar \chi < \mathbb{E}^L(\chi)].
\end{equation}
Now, in order to bound the above probability, we define our function $\mathbb{E}^L(\chi)$ in such a way that it satisfies the following condition
\begin{equation}
\label{PrEL}
\Pr[\chi >(1+\delta^L(\varepsilon^L, \bar\chi)) \bar\chi ] = \Pr[\bar\chi < \mathbb{E}^L(\chi)] ,
\end{equation}
where $\varepsilon^L$, as we see next, is the failure probability, and we have solved the equation $g(\delta^L,\bar\chi) = \varepsilon^L$ in order to write $\delta^L$ as a function of $\varepsilon^L$ and $\bar\chi$. The left-hand-side of Eq.~\eqref{PrEL} is then equivalent to the left-hand-side of Eq.~\eqref{Chernoff bound}, which will then result in
\begin{equation}
\Pr[\mbox{Event 1}] < \varepsilon^L.
\end{equation}
In other words, by choosing $\mathbb{E}^L(\chi)$ in such a way that it satisfies Eq.~\eqref{PrEL} we can use the Chernoff bound to bound the failure probability. The same holds if one works out the upper limit for the average terms with the difference that now one should find $\mathbb{E}^U(\chi)$ such that
\begin{equation}
\label{PrEU}
\Pr[\chi >(1-\delta^U(\varepsilon^U, \bar\chi)) \bar\chi ] = \Pr[\bar\chi < \mathbb{E}^U(\chi)] ,
\end{equation}
with $\varepsilon^U$ being the failure probability for Event 2 and $\delta^U(\varepsilon^U, \bar\chi)$ is the solution to $g(-\delta^U,\bar\chi) = \varepsilon^U$.

Provided that functions $\chi^L = (1+\delta^L(\varepsilon^L, \bar\chi)) \bar \chi$ and $\chi^U = (1-\delta^U(\varepsilon^U, \bar\chi) \bar \chi$ are increasing functions of $\bar \chi$, one obvious choice for $\mathbb{E}^L(\chi)$ ($\mathbb{E}^U(\chi)$) is the inverse function of $\chi^L$ ($\chi^U$). In Appendix~\ref{App:fromX2E}, we show that the above monotonicity condition, in fact, holds, and that would offer a solution to find very tight bounds for all terms of interest.

Our approach offers tighter bounds than the ones proposed in \cite{curty2014finite}. One reason for the difference is that, in \cite{curty2014finite}, the authors use looser forms of the Chernoff bound than the ones we use in Eqs.~\eqref{Chernoff bound} and \eqref{App:EXup}, especially when $\chi$ has small values. But, more importantly, the procedure for finding $\mathbb{E}^U(\chi)$ in \cite{curty2014finite} is somehow heuristic, as compared to our exact calculations, and results in looser upper bounds even in the case of large values of $\chi$. In our numerical results we show how these differences will result in our improving the bounds, and correspondingly the failure rate and/or key rate, in the decoy-state QKD setup. In the rest of this section, we then provide a summary of our analytical results that can be used to bound relevant terms in our formulation.

%In the data postprocessing step, the total number of detections obtained by Bob is a fixed value $M$, with respect to statistical fluctations. For each valid detection event, define a random variable $x_j$ that determines whether or not Alice originally prepares it in signal state. That is, $\chi_j=0$ means a decoy state (weak decoy or vacuum state) and $\chi_j=1$ means a signal state in the $j$th detected pulse. Then, the total number of detected signal states is given by $M^{s}=\sum_j \chi_j$.

%The crucial point of the analysis is that these variables $\chi_i$ are independent from Eve's point of view in the photon number channel model, since they are determined randomly by Alice \cite{curty2014finite}. They might not be identical though, due to different photon numbers. For a usage of $i$-photon channel, Eve cannot distinguish whether Alice labels it as signal or decoy. Thus, one can define random variables $\chi_{i,j}$ that determines whether or not Alice originally prepares a detected $i$-photon in signal or decoy states. Again, these variables are independent. Similarly, for each $i$-photon channel, we have $M_i^{s}=\sum_j \chi_{i,j}$. We have proven that the random variables are independent since Alice chooses them randomly and Eve's interference can only depend on the photon number $i$. Hence, the Chernoff bound method can be employed in a decoy-state QKD system.

\subsection{From $\chi$ to $\bar \chi$}\label{Sec:XtoE}
Given a measurement result $\chi$, we can bound the underlying expectation value $\bar \chi$ for a failure probability bounded by $\varepsilon = 2 \varepsilon^L = 2 \varepsilon^U$. The results are summarized below and the details of calculations are shown in Appendix \ref{App:fromX2E}.

%In this part we only talk about the Chernoff bound's application for estimating the statistical fluctuation. The problem we want to solve in this section is to give a confidence interval of the expectation $\mathbb{E}$ from the detected result $X$. In this paper, we calculate the confidence interval with a fixed failure possibility.%with another method. %In this paper, we will build a box to solve this problem and we will prove this black box is correct in the following section.
%The result is shown as followed and more details are shown in Appendix.~\ref{App:fromX2E}.

\begin{enumerate}
\item
If $\chi=0$, we use
\begin{equation} \label{confidence interval11}
\begin{aligned}
\mathbb{E}^L(\chi)&=0, \\
\mathbb{E}^U(\chi)&=\beta,
\end{aligned}
\end{equation}
where $\beta = -\ln(\varepsilon/2)$.
\item
If $\chi> 0$, we use
\begin{equation} \label{Chernoff:Xg0Ebds}
\begin{aligned}
\mathbb{E}^L(\chi)&=\frac{\chi}{1+\delta^L},\\
\mathbb{E}^U(\chi)&=\frac{\chi}{1-\delta^U}, \\
\end{aligned}
\end{equation}
where $\delta^L$ and $\delta^U$ can be obtained by solving the following equations
\begin{equation} \label{Chernoff:Xg0deltas}
\begin{aligned}
&\left[\frac{e^{\delta^L}}{(1+\delta^L)^{1+\delta^L}}\right]^{\frac{\chi}{1+\delta^L}}=\frac12\varepsilon,\\
&\left[\frac{e^{-\delta^U}}{(1-\delta^U)^{1-\delta^U}}\right]^{\frac{\chi}{1-\delta^U}}=\frac12\varepsilon.
\end{aligned}
\end{equation}
It turns out that the solutions $\delta^L$ and $\delta^U$ to Eq.~\eqref{Chernoff:Xg0deltas} are difficult to calculate when $\chi$ is large. A simplified analytical approximation is given next.
\item
If $\chi\ge 6 \beta$, we use
\begin{equation} \label{confidence interval31}
\begin{aligned}
\delta^L=\delta^U&=\frac{3\beta+\sqrt{8\beta\chi+\beta^2}}{2(\chi-\beta)} \\
\end{aligned}
\end{equation}
in Eq.~\eqref{Chernoff:Xg0Ebds}. This will provide us with a slightly looser bound than the one we can obtain by solving \eqref{Chernoff:Xg0deltas}, but the difference is negligible.

\end{enumerate}

\subsection{From $\bar \chi$ to $\chi$}\label{Section:EtoX}

Once, using the relationships in Sec.~\ref{Sec:XtoE}, $\mathbb{E}^L(\chi)$ and $\mathbb{E}^U(\chi)$ are found for all relevant parameters $\chi$, we use Eqs.~\eqref{VW:Yieibd} and \eqref{result:yi:ei} to calculate ${M_1^{zL}}$ and ${e_{1}^{bxU}}$. In step 3 of the instruction list, we, however, need to calculate $M_1^{szL}$. We know that $\mathbb{E}[M_1^{sz}]=p_1^{sz} M_1^z$. In this section, we will show, using a symmetric form of the Chernoff bound, how to estimate the value of $M_1^{sz}$ from $\mathbb{E}[M_1^{sz}]$.

Let us use our more general notation $\chi$ representing the sum of a number of independent Bernoulli random variables. $M_1^{sz}$ satisfies this condition as written in Eq.~\eqref{M_ia}. Then, we can solve the following equation
\begin{equation} \label{simplier standard Chernoff bound root}
\begin{aligned}
2e^{-\delta^2\bar \chi/(2+\delta)}=\varepsilon, \\
\end{aligned}
\end{equation}
and, using the symmteric form of the Chernoff bound given by \cite{Lecture1,Lecture2}
\begin{equation} \label{simplier Chernoff bound}
\Pr(|\chi- \bar \chi|\ge \delta \bar \chi) \le 2e^{-\delta^2\bar \chi/(2+\delta)},
\end{equation}
we obtain a confidence interval $[\chi^L,\chi^U]$, for which $\Pr \{\chi \in [\chi^L,\chi^U] \}> 1- \varepsilon$, where
\begin{equation} \label{simplier standard lower bound}
\begin{aligned}
\chi^L &=(1-\delta)\bar \chi, \\
\chi^U &=(1+\delta)\bar \chi, \\
\delta &=\frac{-ln(\varepsilon/2)+\sqrt{(\ln(\varepsilon/2))^2-8\ln(\varepsilon/2)\bar \chi}}{2\bar \chi}. \\
\end{aligned}
\end{equation}

In our problem, we have the lower bound for $\bar \chi = \mathbb{E}[M_1^{sz}]$ given by $p_1^{sz} M_1^{zL}$. We can then use the relationship for $\chi^L$ above to calculate $M_1^{szL}$ with a failure probability bounded by $\varepsilon$.

%\subsection{From $e_{1}^{bx}$ to $e_{1}^{pz}$}\label{section:e11pzss}
%\zhen{include the exact formula somewhere, either here or in Appendix}

%Supposing the final secure key is extracted from the raw data in the $Z$ basis, we need its phase error rate $e^{pz}$ to estimate the amount of key cost in privacy amplification. For a single photon source, we know that $e^{pz}=e^{bx}$ in the asymptotic limit of large data size. When the data size is finite, we can apply the random sampling to upper bound $e^{pz}$ from the experimental result of $e^{bx}$ with a failure probability of at most $\varepsilon$.

%Denote $\theta^x=e^{pz}-e^{bx}$, we can bound the deviation \cite{fung2010practical},
%\begin{equation} \label{Random sampling}
%\begin{aligned}
%Pr(e^{pz}\ge e^{bx}+\theta^x)\le \frac{\sqrt{M_{1}^x+M_{1}^z}}{\sqrt{e^{bx}(1-e^{bx})M_{1}^xM_{1}^z}}2^{-(M_{1}^x+M_{1}^z)\xi_x(\theta^x)},\\
%\xi_x(\theta^x)=h(e^{bx}+\theta^x-q^x\theta^x)-q^x h(e^{bx})-(1-q^x)h(e^{bx}+\theta^x),\\
%\end{aligned}
%\end{equation}
%where $q^x=M_{1}^x/(M_{1}^x+M_{1}^z)$ and $h(x)$ is the binary Shannon entropy function. Note that $e^{pz}$ and $e^{bx}$ in this section denote the error rates of single photon component. Then we can solve the above equations and get its root, the upper bound of $\theta^x$. More details are shown in Appendix \ref{APP:theta}.

%To summarize, combining the methods in Sec.\ref{Section:EtoX}, \ref{Sec:XtoE} and \ref{section:e11pzss}, we can estimate parameters which are required by the data postprocessing of a QKD system. More details can be found in Appendix \ref{App:Parameter}.

\section{Numerical results}\label{Sec:Simulation}

In this section, we provide additional insight into our proposed method by numerically comparing it with the other two methods of Chernoff+Hoefding and the Gaussian analysis. We compare the three methods in terms of the tightness of their confidence intervals, or their failure probability, as well as the secret key generation rate and the maximum secure distance in the finite-key setting.

\subsection{Tightness of the bounds}
Here, we compare the two previously proposed methods in \cite{XMA:PRA:2005} and \cite{curty2014finite} with ours in terms of bounding the expectation value $\mathbb{E}[\chi]$, from an observation value $\chi$. For ease of reference, we have summarized the Gaussian analysis method in Appendix \ref{App:StdErrAna} and the Chernoff+Hoeffding method \cite{curty2014finite} in Appendix \ref{App:Curty}. For different methods, we calculate the width of the confidence interval for a fixed failure probability $\varepsilon$. We define this width as $d=(\mathbb{E}^U[\chi]-\mathbb{E}^L[\chi])/2$, which quantifies the tightness of an analysis method. Below, we consider the two extreme cases of large and small value of $\chi$.  %Note that, in the Chernoff+Hoeffding method, the Chernoff bound does not work when the data size is very small, and requires the help of Hoeffding inequality to give a looser bound. For fair comparison, we only consider the scenario that the data size is large enough.

%\zhen{MR: We have to consider both large and small values of $\chi$. New figures need to be added; for instance, for a fixed failure prob, we can plot deviation versus $\chi$.}

%When considering the Chernoff+Hoeffding method, in general, the Chernoff bound method gives a tighter bound than the Hoeffding inequality. For example, when estimating $\mathbb{E}[M^a]$ from $M^a$ in Eq.~\eqref{VW:MEMeqs}, the length of the confidence interval given by Chernoff bound methods is a constant times $\sqrt{M^a}$ and which is estimated by the Hoeffding inequality is a constant times $\sqrt{N^a}$ (as shown in Appendix \ref{App:Curty}). But when the data size is very small, the Chernoff bound method does not work and requires the help of Hoeffding inequality. Here, we only consider the scenario that the data size is large enough.

Figure~\ref{Fig:Deviation1} compares the three methods in terms of the width of the confidence interval $d$ for different failure probabilities when the observed value is rather large. We have normalized the vertical axis by $\sigma=\sqrt{\chi}$, which, for $\chi\rightarrow\infty$, is somehow a measure of standard deviation for the original random variable. Among the three methods, the Gaussian analysis method gives the tightest bounds, but that comes at the price of not being able to rigorously bound the failure rate. Our proposed method almost follows that of the Gaussian curve, while there is a considerable gap between our method and the Chernoff-Hoeffding one. This implies that the latter offers looser bounds on the average terms of interest as compared to our proposed technique.

%For a fixed failure possibility $\varepsilon$, the Gaussian analysis method gives the smallest confidence interval, but this analysis is not rigorous secure. To solve this problem, the Chernoff bound method is proposed to give a rigorous security proof of a QKD system. In this case, the result of our analysis method is very close to the Gaussian analysis method.

%We first consider the Gaussion assume case. In this case, from the centra limit theorem, we know that the absolute value of statistical fluctuation is $n_\alpha$ times of $\sqrt{X}$, where $X$ is the number of the successful measurement event and $n_\alpha$ is constant relates to the failure possibility $\varepsilon$. More details are shown in the.

%This property also works for the Chernoff bound method. We consider the case where the value of $X$ is extremely large and therefore, all the test can be fulfilled. In this case, the absolute value of statistical fluctuation are also constant times of the $\sqrt{X}$. In this way, the constants relate to the the failure possibility $\varepsilon$ can be applied to quantify the tightness of the bounds.

\begin{figure}[hbt]
\centering \resizebox{12cm}{!}{\includegraphics{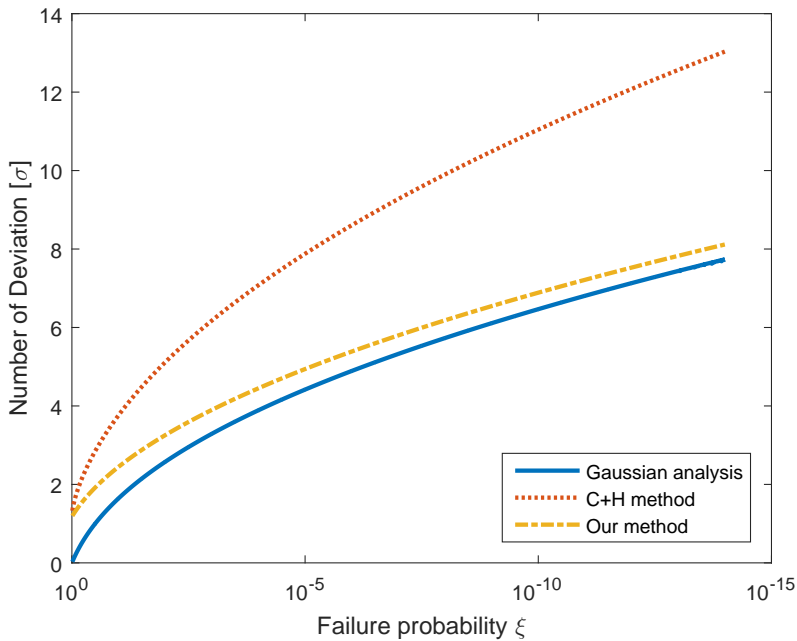}}
\caption{Comparison of the width of the confidence interval versus failure probability for three methods: the Gaussian analysis (solid), the Chernoff+Hoeffding \cite{curty2014finite} (dotted), and our new method (dash-dotted). In each scheme, we find lower and upper bounds for the expectation value $\mathbb{E}[\chi]$ from an observed value $\chi$, at a given failure probability and at $\chi \rightarrow\infty$. The vertical axis then represents $(\mathbb{E}^U[\chi]-\mathbb{E}^L[\chi])/(2\sigma)$, for $\sigma = \sqrt{\chi}$.}
\label{Fig:Deviation1}
\end{figure}

We also compare the three fluctuation analysis methods from another perspective where we fix the fluctuation deviations, $\chi-\mathbb{E}^L[\chi]$ or $\mathbb{E}^U[\chi]-\chi$, and evaluate the failure probabilities. %Our comparison is reasonable because, when the data size is infinite, $\mathbb{E}[X]$ is equal to $X$ and therefore, $X-\mathbb{E}^L[X]$ ($\mathbb{E}^U[X]-X$) can be approximately considered as the difference between $\mathbb{E}$ and $\mathbb{E}^L[X]$ ($\mathbb{E}^U[X]$). Which transform to the same form
The results are shown in Table~\ref{tab:deviations}. We find that in the Chernoff+Hoeffding method \cite{curty2014finite}, the failure probability for Event 2, at an identical deviation, is higher than that of Event 1. This is because, in their formulation, $\chi-\mathbb{E}^L[\chi]\ne\mathbb{E}^U[\chi]-\chi$, and their estimate of the upper bound, $\mathbb{E}^U[\chi]$, is rather loose. For large values of $\chi$, the failure probability for both events is the same for our method as well as the Gaussian analysis one.
%\begin{equation} \label{Differents}
%\begin{aligned}
%\chi-\mathbb{E}^L[\chi]=\mathbb{E}^U[\chi]-\chi=(\mathbb{E}^U[\chi]-\mathbb{E}^L[\chi])/2.
%\end{aligned}
%\end{equation}
It can be seen that the failure probability guaranteed by our method is roughly within one order of magnitude of that of the Gaussian analysis method. Note that, however, in the latter case, the failure probabilities are not guaranteed and they rely on an underlying Gaussian assumption, which is not necessarily the case. Table~\ref{tab:deviations} can then serve as a guideline from which one can specify the desired failure probability and then quickly estimate the corresponding values for $\mathbb{E}^L[\chi]$ and $\mathbb{E}^U[\chi]$.
\begin{table} [htb]
\centering
\caption{The failure probability as a function of the fluctuation deviations, $\chi-\mathbb{E}^L[\chi] = \mathbb{E}^U[\chi]-\chi$ when $\chi\rightarrow\infty$. Here, $\varepsilon_{G}$,  $\varepsilon_{C+H}$, and $\varepsilon_{new}$, respectively, denote the sum failure probability for Events 1 and 2 for the Gaussian analysis, the Chernoff+Hoeffding method \cite{curty2014finite}, and our new method.}\label{tab:deviations}
\begin{tabular}{c|cccc}
\hline
\hline
%\diagbox{Radius ($\delta$)}{Method}&Standard deviation &Ncomms Upper bound & Ncomms Lower bound&Our method  \\ \hline
%Deviation & Gaussian analysis & Previous upper bound & Previous lower bound & Our method  \\ \hline
Deviation & $\varepsilon_{G}$ & $\varepsilon_{C+H}$ & $\varepsilon_{new}$  \\ \hline
$3\sigma$ & $10^{-2.56}$ & $10^{-0.57}$ & $10^{-1.65}$ \\
$5\sigma$ &$10^{-6.24}$ &$10^{-1.90} $&$10^{-5.12}$\\
$7\sigma$ &$10^{-11.59} $&$10^{-3.90}$ &$10^{-10.33}$\\
$9\sigma$ &$10^{-18.64} $ &$10^{-6.57}$&$10^{-17.28}$\\
\hline
\hline
\end{tabular}
\end{table}

Our method is particularly attractive when the observed counts are small. As shown in Figure \ref{Fig:smallx}, we compare our method with the Gaussian analysis, at a fixed failure probability of $\varepsilon=10^{-10}$, in terms of lower and upper bounds on the expectation value $\mathbb{E}[\chi]$ when the observed value for $\chi$ is small. When estimating the upper bound, the Gaussian analysis is always tighter than our new method. When $\chi\rightarrow 0$, the upper bound of the Gaussian analysis is $0$ and that of our new method is $23.7190$, which is equal to the value of $\beta$ at $\varepsilon=10^{-10}$. Our method, nevertheless, offers a tighter estimation of the lower bound for $\chi < 2257$. In comparison with the Chernoff+Hoeffding method, our method offers a substantial advantage in the sense that our required deviations are optimized by solving Eq.~\ref{Chernoff:Xg0deltas}, whereas in the Chernoff+Hoeffding method the deviations are proportional to the number of counts; see, e.g., Eq.~\eqref{Hoeffding inequality} in Appendix \ref{App:Curty}.

%As shown in Fig.~\ref{Fig:Deviation1} and Table~\ref{tab:deviations}, given a fixed failure probability $\varepsilon$, the fluctuation deviation can be written as a constant multiplied by $\sigma=\sqrt{\chi}$, hence the corresponding relative deviation, normalized by $\chi$, is a constant multiplied by $1/\sqrt{\chi}$. Thus, the improvement of our new method over the Chernoff+Hoeffding method also depends on the observed value.

\begin{figure}[bht]
\centering \resizebox{12cm}{!}{\includegraphics{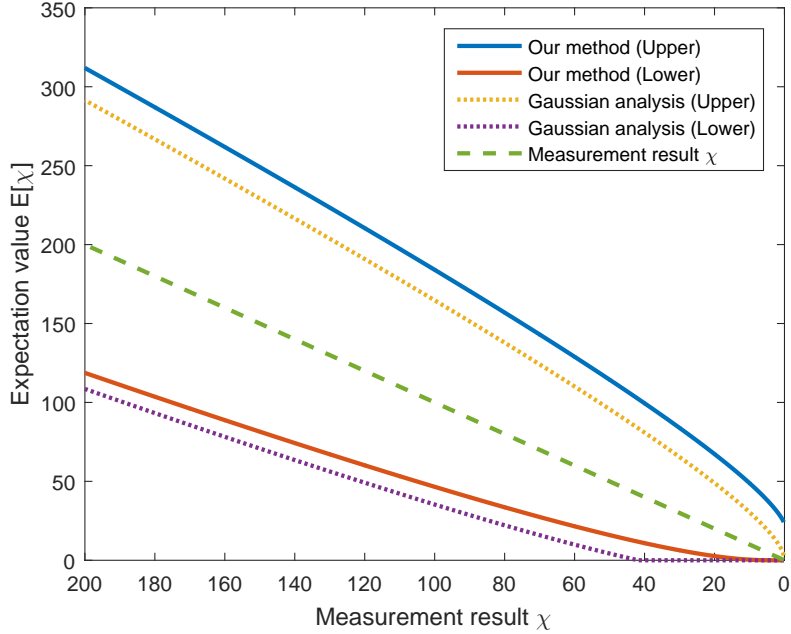}}
\caption{Lower and upper bounds of the expectation value versus observed values of $\chi$ for the Gaussian analysis (dotted) and our new method (solid). In both cases, the failure probability is fixed at $\varepsilon=10^{-10}$.}
\label{Fig:smallx}
\end{figure}

%For example, we compare them with a typical parameter $\varepsilon=10^{-10}$. With the Gaussian analysis, the difference between the upper (lower) bound and the target value is $6.467$ deviations. When considering the Chernoff bound method, we assume value of $X$ is extremely big and fulfills all the tests. With our result, the fluctuation is $6.89$ deviations. With previous Chernoff bound, the upper bound and lower bound are calculated with different formula. The fluctuation of the lower bound is $8.3$ deviations while the upper bound's is $13.8$ deviations.

Another interesting feature of our methodology is the dependence of the failure probability on the observed value $\chi$. As shown in Fig.~\ref{Fig:Deviation1} and Table~\ref{tab:deviations}, given a fixed failure probability $\varepsilon$, the fluctuation deviation can be written as a constant multiplied by $\sigma=\sqrt{\chi}$. One could ask the opposite question that for a given fluctuation deviation of $n_\alpha \sigma$, for a fixed value of $n_\alpha$, how the failure probability would vary with $\chi$. This question has been answered in Corollary~\ref{corollary:fixed intercal:probability} and the results have been shown in Fig.~\eqref{Fig:epsfinitechi} for several different values of $n_\alpha$. It can be seen that for large values of $\chi$, the fluctuation probability approaches the constant value given in Table~\ref{tab:deviations}. For small values of $\chi$, however, the failure probability goes up as now, for the given confidence interval, the chance of making an error is higher. This is in contrast with what the Gaussian analysis method assumes in that the failure probability for a fixed value of $n_\alpha$ is independent of $\chi$; see Eq.~\eqref{deviation} in Appendix~\ref{App:StdErrAna}. This is how our method offers a more rigorous approach to the finite-key analysis as compared to the Gaussian analysis method.

\begin{figure}[hbt]
\centering \resizebox{12cm}{!}{\includegraphics{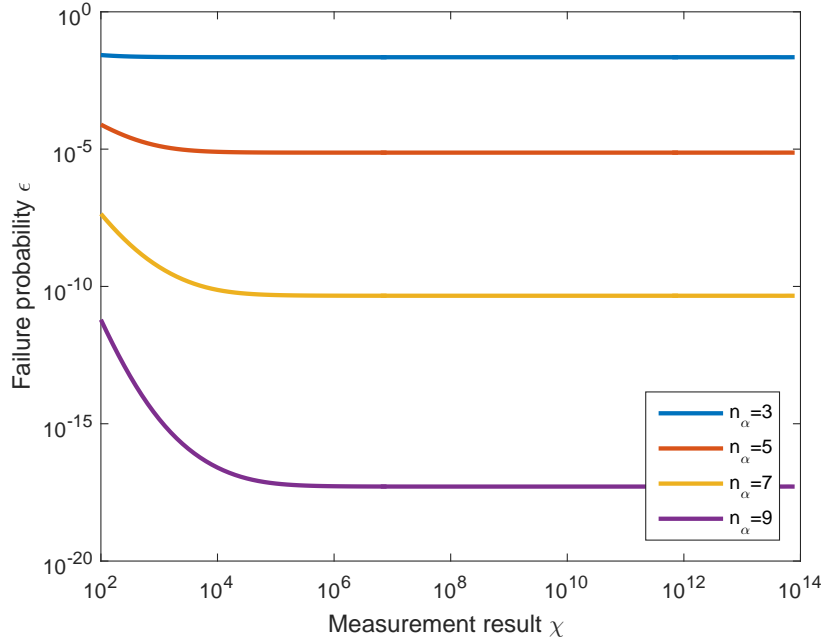}}
\caption{The total failure probability $\varepsilon$ versus the observed value $\chi$ when we fix the deviation from the mean value is given by $n_\alpha \sigma$, for $n_\alpha=3,5,7,9$ from top to bottom curves.}
\label{Fig:epsfinitechi}
\end{figure}

\subsection{Key rate comparison}
In order to compare the performance of our technique, in terms of the final key rate and the maximum secure transmission distance, with previous work, we simulate our QKD system by assuming that the observed values for different parameters of interest is given by their asymptotic values in an Eve-free experiment. These values have been summarized below \cite{XMA:PRA:2005}:
%\zhen{MR: we have to think of a better notation here to not confuse our measurement results with the parameters we have defined before.}
\begin{equation}\label{Yi:Qi}
\begin{aligned}
Q^{a}&=Y_0+(1-Y_0)(1-e^{-\eta\mu_a}), \\
E^{a}Q^{a}&=e_0Y_0+e_d(Q^{a}-Y_0),\\
%Y_i&=1-(1-Y_0)(1-\eta)^i, \\
%e_iY_i&=e_0Y_0+e_d(Y_i-Y_0),
\end{aligned}
\end{equation}
where $\eta$ is the total transmittance, $Q^{a}$ and $E^{a}$ are the overall gain and QBER, $e_d$ is the misalignment error rate, and the error rate of the background noise, $e_0$, is equal to $1/2$. Note that the values used in Eq.~\eqref{Yi:Qi} is for simulation purpose only. In a real experiment, all the variables on the left hand side can directly be measured. For the simulation of the asymptotic case with an infinite number of decoy states, where all the channel properties can be estimated accurately, we use the following formula
\begin{equation}\label{Sim:Yiei}
\begin{aligned}
Y_i &=1-(1-Y_0)(1-\eta)^i, \\
e_iY_i &=e_0Y_0+e_d(Y_i-Y_0),
\end{aligned}
\end{equation}
where $Y_i$ and $e_i$  are the yield and the error rate of the $i$-photon channel.

In our numerical results, we optimize the choice of the intensities and the ratios of the signal, weak decoy, and vacuum state to maximize the final key rate. To perform parameter optimization, the local search algorithm (LSA) \cite{boyd2004convex} is employed. In the following simulation, we use the parameters of a practical QKD system \cite{GYS}, as listed in Table \ref{tab:parameters}. Note that, in our work, $\varepsilon$ represents the failure probability of each step. In our method, the failure probability of a single upper (lower) bound is $\varepsilon/2$ and therefore, the failure probability of a confidence interval, composed of an upper bound and a lower bound, is $\varepsilon$. The total failure probability of the whole QKD system (including both $X$ and $Z$ bases) is $8\varepsilon$.

\begin{table}[hbpt]
\centering  %
\caption{Parameters for a practical QKD system where $\eta_d$ is the detection efficiency, $f$ is the inefficiency of error correction, and $N$ is the number of pulses sent by Alice.
%\zhen{Is N per basis, what is the sifting factor?(done)}
}\label{tab:parameters}
\begin{tabular}{lcccccc}
\hline
$\eta_d$  &$Y_0$ & $f$& $e_d$& Loss &$\varepsilon$ & $N$ \\
\hline
4.5\%  &$1.7\times10^{-6}$ &1.22 ~&$3.3\%$&0.21 dB/km &$10^{-10}$ & $10^{10}$ \\
\hline
\end{tabular}
\end{table}

We compare the three discussed fluctuation analysis methods with the asymptotic case, where, in the latter, the data size is infinitely large and its statistical fluctuations can be ignored. The results are shown in Fig.~\ref{Fig:RdisComp}. It is clear that our new method always provides a larger final key rate than the Chernoff+Hoeffding method \cite{curty2014finite}. For $N=10^{10}$, our analysis method increases the maximum secure transmission distance by $7$ km. In the limit of short transmission distances, the number of pulses detected by Bob is very large, and, therefore, the improvement of our new method is not substantial. In the regime around the maximum secure transmission distance, the value of $\chi$ is small and our new method is advantageous. Meanwhile, from Fig.~\ref{Fig:RdisComp}, one can clearly see that our new method achieves a very close performance to the widely-used Gaussian analysis method \cite{XMA:PRA:2005}.

\begin{figure}[tbh]
\centering \resizebox{12cm}{!}{\includegraphics{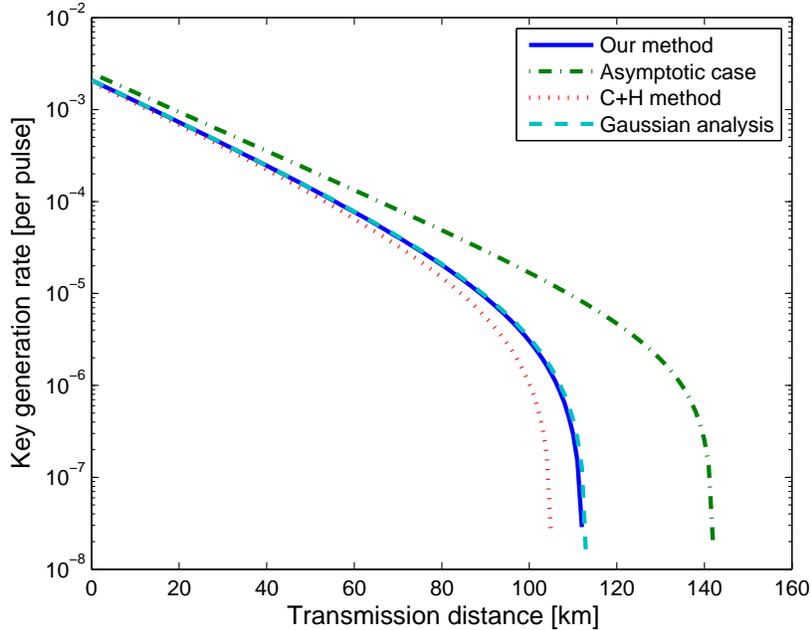}}
\caption{Comparison of the key rates obtained by the three methods, the Gaussian analysis, the Chernoff+Hoeffding method \cite{curty2014finite}, and our new method. The infinite key length case is also shown in this figure.}
\label{Fig:RdisComp}
\end{figure}

For our method, at short QKD distances, the optimized intensity of the signal state $\mu$ is equal to $0.45$. As the distance increases, the optimum intensity of the signal state decreases. At a distance of $100$ km, the optimized $\mu$ decreases to $0.37$ with other optimized parameters listed in Table \ref{Tab:parameters}. All the results are consistent with the Gaussian analysis case \cite{XMA:PRA:2005}. %\zhen{ Note that, in the literature, the total data size $N=6\times10^9$ is in the same order of magnitude as $N=10^{10}$ in this paper. Consequently, the simulation results of those two paper are very similar.} \zhen{Doesn't this depend on N? $N=6\times10^9$ in\cite{XMA:PRA:2005}(done)}

\begin{table}[hbpt]
\centering  %
\caption{Optimized parameters at 100 km.} \label{Tab:parameters}
\begin{tabular}{ccccccc}
\hline
Key rate &$\nu$  &$\mu$ & $p_{\nu}$& $p_{\mu}$ \\
\hline
$3.04\times 10^{-6}$ &$0.126$&$0.370$&$0.250$&$0.650$ \\
\hline
\end{tabular}
\end{table}

Finally, in Fig.~\ref{Fig:distance}, we consider the relation between the data size and the corresponding maximum secure transmission distance for all three methods disucssed. When the total data size of a QKD protocol is larger than $10^{14}$, its maximum secure transmission distance is very close to the asymptotic limit of 142 km. No secret keys can be exchanged at a data size, $N$, roughly below $10^{7}$. The curves of our method and the Gaussian case are almost the same. When N is smaller than $10^{12}$, all three curves are very steep. Consequently, the gap between maximum secure transmission distances of our method and the Chernoff+Hoeffding method is distinct. For example, as shown in both Fig.~\ref{Fig:RdisComp} and Fig.~\ref{Fig:distance}, our method increases the maximum transmission distance by $7$ km when total data size $N=10^{10}$.

\begin{figure}[tbh]
\centering \resizebox{12cm}{!}{\includegraphics{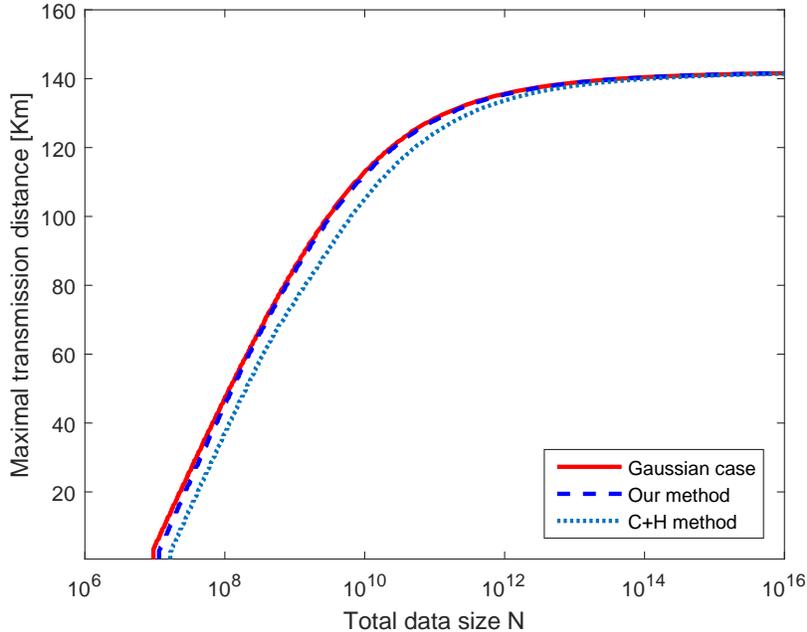}}
\caption{Maximum secure transmission distance versus the number of pluses sent by Alice, $N$. The simulation parameters are listed in Table \ref{Tab:parameters}. No secure keys can be generated for $N\le10^{7}$. The asymptotic limit for the maximum secure transmission distance is $142$ km when $N\ge10^{14}$.}
\label{Fig:distance}
\end{figure}

%In order to compare our method with previous work fairly, we require that the simulation method and parameters are same, the only difference is the Chernoff analysis. As shown in Fig.\ref{Fig:RdisComp}, when $N_{total}$ is big enough, the final key rate is very close to the infinite key length solution. When total data size is not big enough, in the cases where $N_{total}=10^{8}$ and $ N_{total}=10^{10}$, our method both shows advantage over the previous work. As shown in TABLE \ref{datasize}, our new method can provide higher final key rate and longer transmission distance comparing to the previous work.
%\begin{table}[htbp]\footnotesize
%\centering
%\begin{tabular}{c|cc|cc|cc} %{23cm}{@{\extracolsep{\fill}}c|ccc|ccc|ccc} % {lccc} 表示各列元素对齐方式，left-l,right-r,center-c
%\hlinel
% Distance&0km &0km&40km &40km &100km&100km \\ \hline  % \hline 在此行下面画一横线
%Date-size&$10^{8}$&$10^{10}$&$10^{8}$&$10^{10}$&$10^{8}$&$10^{10}$\\ \hline
%Ncomms\cite{curty2014finite}&8.18$\times10^{-4}$ &1.98$\times 10^{-3}$&0 &2.22$\times 10^{-4}$ & 0 &1.04$\times 10^{-6}$\\
% Our method&1.02$\times 10^{-3}$ &$2.08\times 10^{-3}$& 1.71$\times 10^{-5}$ &2.45$\times 10^{-4}$& 0 & 3.04$\times 10^l{-6}$\\
%\hline
%\end{tabular}
%\caption{}
%\label{datasize}
%\end{table}

\section{Conclusions and Discussion}\label{Sec:Conclusion}
In this paper, we developed a tight bound for the decoy-state QKD system when the finite-data-size effects are taken into account. As compared to the early work on this topic, which relied on Gaussian approximations, our method offered a rigorous approach to estimating the failure probability. In that sense, our method was similar to the recently proposed techniques relying on Chernoff and Hoeffding inequalities. Our proposed method could, however, substantially improve the performance by yielding a smaller failure probability, for a similar confidence interval, than what the Chernoff+Hoeffding method could offer. In fact, after parameter optimization, our method could offer similar performance to the widely-used Gaussian analysis method, which uses non-rigorous Gaussian approximations.

%In previous theoretical and experimental works, their statistical fluctuation analyses are based on the Gaussian assumption. To reduce the Gaussian assumption, all of them should be analyzed with new method. In the Gaussian assumption case, the parameters are estimated with the number of deviation $n_\alpha$. For example, when $n_\alpha$ equals to $5$, the accordingly failure probability of Chernoff+Hoeffding method \cite{curty2014finite} is equal to $10^{-1.90}$, which cannot ensure the security of a QKD system. To solve this problem, we proposed a new method. Using our method, all the results of previous works can be modified with the failure probability. The corresponding failure probability roughly increases 1 orders of magnitude. As shown in Table~\ref{tab:deviations} and Appendix \ref{APP:Simplified case for the Chernoff bound estimation}, the above conclusion works for different $n_\alpha$'s.

There are several problems to which our methodology can be applied. In this work, we assumed that the phase of the weak coherent state was continuously randomized. When the phase is not randomized, we know that security loopholes may allow for certain attacks \cite{LoPreskill:NonRan:2007,Tang2013Source}. In practice, it is difficult to randomize the phase of a laser pulse continuously. Instead one can apply the discrete phase randomization \cite{Discretephase2014}, using which the final secure key rate is slightly reduced. Our finite-key analysis for the decoy-state method can then be applied to the discrete phase randomization case. Our method is also applicable to the biased BB84 protocol \cite{Wei2013DecoyBiased}, in which the choice of basis is not symmetric. The analysis method in this work can also be used in other protocols, such as measurement-device-independent QKD protocol \cite{Lo:MIQKD:2012,Braunstein2012MDIQKD} and round-robin differential-phase-shift QKD protocol \cite{sasaki2014practical,zhang2015round}. We expect that our methodology will offer similar performance to the Gaussian analysis method, while the security parameters have been rigorously estimated. In addition to finite-size effects, laser source intensity fluctuations should also be taken into consideration in practice \cite{wang2008,wang2009}. It is important to investigate all these practical issues together for QKD systems.

\section{Acknowledgments}
The author acknowledges insightful discussions with Z.~Cao, M.~Curty, C.-H.~F.~Fung, H.-K.~Lo, N.~L\"utkenhaus, and X.~Yuan. This work was supported by the 1000 Youth Fellowship program in China and the UK EPSRC Grant No. EP/M013472/1.

\appendix

\section{From $\chi$ to $\bar \chi$}\label{App:fromX2E}
%\zhen{This appendix can be improved in the following ways:\\
%(done)1- Zhen, please read the text, and update the notation in appendix; there are also some typos with some paramters. please make sure all equations are correct.\\
%(done)2- the appendix can substantially be shortened as now the bounding problem has clearly been described in the text. Remove unnecessary parts and go straight to the point.\\
%(done)3- most of this appendix is about derivations. Rewrite the appendix in the form of Claim/Lemma and then proof to make it more readable. Now it looks like that you have listed all the \\ derivations without telling what they are for.
%(done)4- the results here must match the ones summarised in the text.\\
%5- Do the same for Appendix C and D if necessary; I have not yet edited them.}

\subsection{Chernoff bound method}\label{APP:Chernoff bound method}
In this section, we provide a confidence interval for the expectation value ${\bar \chi}$ based on the observed value $\chi$. We use the methodology described in Sec.~\ref{Sec:chernoff} and the original forms of the Chernoff bound in Eqs.~\eqref{Chernoff bound} and \eqref{App:EXup}. Our proposed method works even if $\chi$ approaches $0$, and unlike the Chernoff+Hoesffding method, we do not need to use the Hoeffding inequality in this regime. %Since this method requires to solve equations, its speed is not fast and cannot obtains an accurate result with calculation software when the value of $\chi$ is very large.
Without loss of generality, we assume that the failure probabilities for Events 1 and 2 are equal and are given by $\varepsilon/2$. the total failure probability in bounding the expected values is then given by $\varepsilon$. As mentioned in Sec.~\ref{Sec:chernoff}, the lower and upper bounds on ${\bar \chi}$ can be obtained by, respectively, solving the following set of equations:
\begin{equation} \label{successful4}
\begin{aligned}
g(\delta^L,\bar\chi)&=[\frac{e^{\delta^L}}{(1+\delta^L)^{1+\delta^L}}]^{\bar \chi}=\varepsilon/2,\\
{\bar \chi}&=\frac{\chi}{1+\delta^L}, \\
\delta^L&\ge0,\\
\end{aligned}
\end{equation}
and 
\begin{equation} \label{successful5}
\begin{aligned}
g(-\delta^U,\bar\chi)&=[\frac{e^{-\delta^U}}{(1-\delta^U)^{1-\delta^U}}]^{\bar \chi}=\varepsilon/2,\\
{\bar \chi}&=\frac{\chi}{1-\delta^U}, \\
0&<\delta^U<1, \\
\end{aligned}
\end{equation}
or equivalently, for given values of $\chi$ and $\varepsilon$, we need to solve the following two equations
\begin{equation} \label{shortereqns}
\begin{aligned}
g(\delta^L, \chi/(1+\delta^L)) &= \varepsilon/2 \\
g(-\delta^U, \chi/(1-\delta^U)) &= \varepsilon/2
\end{aligned}
\end{equation}
to obtain $\delta^L$ and $\delta^U$. The lower and upper bounds of $\mathbb{E}[\chi]$ are then given by
\begin{equation} \label{confidence interval2}
\begin{aligned}
&{\mathbb{E}^L}[\chi]=\frac{\chi}{1+\delta^L},\\
&{\mathbb{E}^U}[\chi]=\frac{\chi}{1-\delta^U}.\\
\end{aligned}
\end{equation}

\begin{Claim} \label{Lemma:chernof:lower bound}
For all $\chi > 0$, there exist unique answers for $\delta^L>0$ and $0<\delta^U<1$ in Eq.~\eqref{shortereqns}.
\end{Claim}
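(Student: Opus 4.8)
The plan is to reduce each of the two transcendental equations in Eq.~\eqref{shortereqns} to a one-dimensional root-finding problem by taking logarithms, exploiting the fact that the substitution $\bar\chi = \chi/(1\pm\delta)$ makes the outer exponent cancel cleanly against the $(1\pm\delta)$ inside the base. For the lower-bound equation, I would take the logarithm of $g(\delta^L, \chi/(1+\delta^L)) = \varepsilon/2$ to obtain
\begin{equation}
\chi\left[\frac{\delta^L}{1+\delta^L} - \ln(1+\delta^L)\right] = \ln(\varepsilon/2),
\end{equation}
so that, defining $F_L(\delta) = \frac{\delta}{1+\delta} - \ln(1+\delta)$, the problem becomes showing that $F_L(\delta) = \ln(\varepsilon/2)/\chi$ has a unique root in $(0,\infty)$. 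The upper-bound equation reduces in the same way to $F_U(\delta) = \ln(\varepsilon/2)/\chi$ with $F_U(\delta) = -\frac{\delta}{1-\delta} - \ln(1-\delta)$, where a unique root in $(0,1)$ is sought. Since $\varepsilon$ is a failure probability with $0 < \varepsilon < 1$, the right-hand side $c \equiv \ln(\varepsilon/2)/\chi$ is a fixed, strictly negative constant for every $\chi > 0$.

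Next I would pin down the shape of $F_L$ and $F_U$ by elementary calculus. Differentiation gives $F_L'(\delta) = -\delta/(1+\delta)^2 < 0$ for $\delta > 0$ and $F_U'(\delta) = -\delta/(1-\delta)^2 < 0$ for $\delta \in (0,1)$, so both functions are strictly decreasing on their domains, and both satisfy $F_L(0) = F_U(0) = 0$. The remaining ingredient is the behaviour at the far end of each interval: as $\delta \to \infty$ the term $\ln(1+\delta)$ diverges while $\delta/(1+\delta)$ stays bounded, so $F_L(\delta) \to -\infty$; and as $\delta \to 1^-$, the substitution $u = 1-\delta \to 0^+$ yields $F_U = 1 - 1/u - \ln u \to -\infty$, with the $-1/u$ term dominating. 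Thus $F_L$ is a continuous, strictly decreasing bijection of $[0,\infty)$ onto $(-\infty, 0]$, and $F_U$ a continuous, strictly decreasing bijection of $[0,1)$ onto $(-\infty, 0]$.

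With monotonicity and the endpoint limits established, the conclusion would follow from the intermediate value theorem together with strict monotonicity: because $c < 0$ lies in the interior of the range $(-\infty,0)$ of each function, there is exactly one $\delta^L \in (0,\infty)$ with $F_L(\delta^L) = c$ and exactly one $\delta^U \in (0,1)$ with $F_U(\delta^U) = c$, which are the desired unique solutions of Eq.~\eqref{shortereqns}.

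I expect the whole argument to be essentially routine calculus, so there is no serious obstacle; the only step requiring any care is the limit of $F_U$ as $\delta \to 1^-$, where one must confirm that the divergence of $-\delta/(1-\delta)$ is not offset by $-\ln(1-\delta)$, a point the substitution $u = 1-\delta$ settles at once. It is also worth recording explicitly the mild hypothesis $\varepsilon/2 < 1$, which guarantees $c < 0$; without it the right-hand side would be non-negative and only the trivial root $\delta = 0$ would exist, but this case is excluded since $\varepsilon$ is a failure probability.
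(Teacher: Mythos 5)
Your proposal is correct and follows essentially the same route as the paper: the paper's proof also takes logarithms to reduce Eq.~\eqref{shortereqns} to $g_2(\delta^L)=\beta/\chi$ and $g_2(-\delta^U)=\beta/\chi$ with $g_2(\delta)=\ln(1+\delta)-\delta/(1+\delta)$ (which is just $-F_L$ in your notation, evaluated at $\pm\delta$), and then invokes the endpoint values $g_2(0)=0$, $g_2(\infty)=g_2(-1)=\infty$ together with monotonicity on each subinterval. Your version merely spells out the derivative computation and the $\delta\to 1^-$ limit that the paper leaves as "easy to verify."
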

\begin{proof}
Let us first rewrite Eq.~\eqref{shortereqns} as follows:
\begin{equation} \label{shortereqns2}
\begin{aligned}
g_2(\delta^L) &= \ln(1+\delta^L)-\delta^L/(1+\delta^L) = \beta/\chi, \\
g_2(-\delta^U) &= \ln(1-\delta^U)+\delta^U/(1-\delta^U) = \beta/\chi,
\end{aligned}
\end{equation}
where $\beta=-\ln(\varepsilon/2)\ge 0$. It is easy to verify that $g_2(0) = 0$, $g_2(\infty) = \infty$, and $g_2(-1) = \infty$. This would guarantee that there exists solutions for $\delta^L$ and $\delta^U$ in their respective regions. Furthermore, it can be verified that $g_2(\delta)$ is a monotonic function of $\delta$ in both regions of $-1<\delta<0$ and $\delta>0$. This guarantees that the solutions found are unique. This would imply that the corresponding lower and upper bounds in Eq.~\eqref{confidence interval2} would provide us with the tightest bound possible in Eqs.~\eqref{PrEL} and \eqref{PrEU}.
\end{proof}

%When $E\le -\ln(\varepsilon/2)$ for a fixed failure probability $\varepsilon/2$, we cannot obtain the lower bound of $X^L(E)$ whose failure probability is smaller than $\varepsilon/2$. Hence, the upper bound of $E^U(X)$ cannot be calculate according to Eq.~\eqref{euivalent}. But a upper bound of $E$ has been given from the condition $E\le -\ln(\varepsilon/2)$. When $E\ge -\ln(\varepsilon/2)$, above analysis can be applied. The value $-\ln(\varepsilon/2)$ is a lower bound of all upper bounds $E^U(X)$. For example, when $X=0$, the lower bound is $0$ and the upper bound is $-\ln(\varepsilon/2)$.

\begin{corollary} \label{corollary:fixed intercal:probability}
For a given observed value $\chi$ and a confidence interval $[{\mathbb{E}^L}[\chi],{\mathbb{E}^U}[\chi]]$, the failure probability is given by
\begin{equation} \label{failure probability totoal}
\begin{aligned}
\varepsilon=e^{-\chi g_2(\delta^L)}+e^{-\chi g_2(-\delta^U)},
\end{aligned}
\end{equation}
where $\delta^L$ and $\delta^U$ can be obtained from Eq.~\eqref{confidence interval2}.
\end{corollary}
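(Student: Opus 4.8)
The plan is to read the corollary as a direct bookkeeping consequence of the failure-probability framework of Sec.~\ref{Sec:chernoff}, combined with the algebraic identity that is already implicit in the passage from Eq.~\eqref{shortereqns} to Eq.~\eqref{shortereqns2}. First I would invoke the framework to write the total failure probability as the sum of the two one-sided contributions, $\varepsilon = \varepsilon^L + \varepsilon^U$, where $\varepsilon^L$ bounds the probability of Event~1 and $\varepsilon^U$ bounds that of Event~2. By the construction in Eqs.~\eqref{PrEL} and \eqref{Chernoff bound}, $\varepsilon^L$ equals the Chernoff-bound value $g(\delta^L,\bar\chi)$ evaluated at the relevant boundary of the interval, namely at $\bar\chi=\mathbb{E}^L[\chi]=\chi/(1+\delta^L)$ read off from Eq.~\eqref{confidence interval2}; symmetrically, $\varepsilon^U = g(-\delta^U,\bar\chi)$ at $\bar\chi=\mathbb{E}^U[\chi]=\chi/(1-\delta^U)$. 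Here $\delta^L$ and $\delta^U$ are precisely the quantities recovered from the prescribed confidence-interval endpoints through Eq.~\eqref{confidence interval2}.

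The core of the argument is then to evaluate these two instances of $g$ explicitly. Writing $g(\delta,\bar\chi)=\exp\{\bar\chi[\delta-(1+\delta)\ln(1+\delta)]\}$ and substituting $\bar\chi=\chi/(1+\delta^L)$, the factor $(1+\delta^L)$ cancels and the exponent collapses to $\chi[\delta^L/(1+\delta^L)-\ln(1+\delta^L)]=-\chi g_2(\delta^L)$, with $g_2$ exactly the function introduced in Eq.~\eqref{shortereqns2}. Hence $g(\delta^L,\chi/(1+\delta^L))=e^{-\chi g_2(\delta^L)}$. Repeating the same one-line simplification for the upper bound, with $\bar\chi=\chi/(1-\delta^U)$ and the convention $g_2(-\delta^U)=\ln(1-\delta^U)+\delta^U/(1-\delta^U)$, gives $g(-\delta^U,\chi/(1-\delta^U))=e^{-\chi g_2(-\delta^U)}$. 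Adding the two contributions yields the claimed $\varepsilon=e^{-\chi g_2(\delta^L)}+e^{-\chi g_2(-\delta^U)}$.

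There is no genuine analytic obstacle: $\delta^L$ and $\delta^U$ are determined by the given interval via Eq.~\eqref{confidence interval2}, and the statement is merely the sum of the two Chernoff tail values rewritten through $g_2$. The one point requiring care is the interpretation of ``failure probability'': I would be explicit that each $g$ value is an \emph{upper bound} on the corresponding tail probability, coming from the inequalities \eqref{Chernoff bound} and \eqref{App:EXup}, so that the $\varepsilon$ of the statement is the \emph{guaranteed} bound on the total failure probability, not its exact value. It is also worth stressing that, unlike the symmetric choice $\varepsilon^L=\varepsilon^U=\varepsilon/2$ adopted in Eqs.~\eqref{successful4} and \eqref{successful5}, this corollary holds for an arbitrarily specified interval, so the two exponentials need not coincide; this is exactly the regime used to populate Table~\ref{tab:deviations} and to draw Fig.~\ref{Fig:epsfinitechi}, where the deviation is fixed at $n_\alpha\sigma$ and Events~1 and~2 may contribute unequal probabilities.
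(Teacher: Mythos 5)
Your proposal is correct and follows essentially the same route as the paper: it decomposes $\varepsilon$ into the two one-sided contributions $\varepsilon^L+\varepsilon^U$ and identifies each with the Chernoff tail value at $\bar\chi=\chi/(1\pm\delta)$, which after taking logarithms is exactly the relation $\beta^{L}=\chi g_2(\delta^L)$, $\beta^{U}=\chi g_2(-\delta^U)$ used in the paper's proof. Your added remark that $\varepsilon$ is a guaranteed upper bound on the failure probability (since Eqs.~\eqref{Chernoff bound} and \eqref{App:EXup} are inequalities) is a fair clarification but does not change the argument.
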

\begin{proof}
From Eq.~\eqref{shortereqns2}, the values of $\beta^L$ ($\beta^U$) can be calculated as follows
\begin{equation} \label{fixed intercal:beta}
\begin{aligned}
\beta^L=\chi g_2(\delta^L),\\
\beta^U=\chi g_2(-\delta^U).\\
\end{aligned}
\end{equation}
From their definition, we also have $\beta^L=-\ln(\varepsilon^L)$ and $\beta^U=-\ln(\varepsilon^U)$, where $\varepsilon^L$ ($\varepsilon^U$) is the corresponding failure probability to Event 1 (2), which results in
\begin{equation} \label{fixed intercal:varepsilon}
\begin{aligned}
\varepsilon^L=e^{-\chi g_2(\delta^L)},\\
\varepsilon^U=e^{-\chi g_2(-\delta^U)}.\\
\end{aligned}
\end{equation}
The failure probability of the given confidence interval, $\varepsilon$, is then given by $\varepsilon^L+\varepsilon^U=e^{-\chi g_2(\delta^L)}+e^{-\chi g_2(-\delta^U)}$.
\end{proof}

\begin{Claim} \label{corollary:chernof:bound}
In the limit of $\chi\rightarrow\infty$, the lower and upper bounds of $\bar \chi$ in Eq.~\eqref{confidence interval2} are given by,
\begin{equation} \label{confidence interval1new}
\begin{aligned}
{\mathbb{E}}^L[\chi]&=\chi(1 -\sqrt{\frac{2\beta}{\chi}}), \\
\mathbb{E}^U[\chi]&=\chi(1 +\sqrt{\frac{2\beta}{\chi}}).\\
\end{aligned}
\end{equation}
\end{Claim}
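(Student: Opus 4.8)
The plan is to read off the asymptotics of $\delta^L$ and $\delta^U$ from the implicit equations already established in the proof of Claim~\ref{Lemma:chernof:lower bound}, and then substitute them into the exact bounds of Eq.~\eqref{confidence interval2}. Recall from Eq.~\eqref{shortereqns2} that $\delta^L$ and $\delta^U$ solve $g_2(\delta^L) = \beta/\chi$ and $g_2(-\delta^U) = \beta/\chi$, where $g_2(\delta) = \ln(1+\delta) - \delta/(1+\delta)$ and $\beta = -\ln(\varepsilon/2)$. The exact bounds are $\mathbb{E}^L[\chi] = \chi/(1+\delta^L)$ and $\mathbb{E}^U[\chi] = \chi/(1-\delta^U)$, so the whole task reduces to finding how $\delta^L$ and $\delta^U$ behave as $\chi \to \infty$ with $\beta$ held fixed.

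First I would observe that the right-hand side $\beta/\chi \to 0$, and since $g_2$ is continuous and strictly monotone on each relevant interval with $g_2(0)=0$ (both facts established in Claim~\ref{Lemma:chernof:lower bound}), this forces $\delta^L, \delta^U \to 0$. I would then Taylor-expand $g_2$ about the origin; a short computation gives
\begin{equation*}
g_2(\delta) = \frac{\delta^2}{2} + O(\delta^3)
\end{equation*}
for both the $+\delta^L$ and $-\delta^U$ branches, the cubic terms having opposite signs but equal magnitude and hence not affecting the leading order. Inverting the leading balance $\delta^2/2 \approx \beta/\chi$ yields $\delta^L, \delta^U = \sqrt{2\beta/\chi}\,\bigl(1 + O(\sqrt{\beta/\chi})\bigr)$. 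Substituting these into Eq.~\eqref{confidence interval2} and expanding $1/(1\pm\delta) = 1 \mp \delta + O(\delta^2)$ then produces $\mathbb{E}^L[\chi] = \chi(1 - \sqrt{2\beta/\chi}) + O(1)$ and $\mathbb{E}^U[\chi] = \chi(1 + \sqrt{2\beta/\chi}) + O(1)$, which is the claimed form once the bounded additive remainder is dropped in the $\chi \to \infty$ limit.

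The step I expect to require the most care is the bookkeeping of orders, rather than any individual calculation. The leading deviation of each bound from $\chi$ is $\chi\delta \sim \sqrt{2\beta\chi}$, which grows like $\sqrt{\chi}$, and I must confirm that everything discarded is genuinely smaller. The relative $O(\sqrt{\beta/\chi})$ correction to $\delta$ contributes $\chi \cdot \sqrt{2\beta/\chi}\cdot O(\sqrt{\beta/\chi}) = O(\beta)$, and the $O(\delta^2)$ term in expanding $1/(1\pm\delta)$ contributes $\chi \cdot O(2\beta/\chi) = O(\beta)$; both are bounded uniformly in $\chi$. Thus the discarded terms are $O(1)$ additive errors, negligible relative to both $\chi$ and the retained deviation $\sqrt{2\beta\chi}$ as $\chi \to \infty$, which is precisely the sense in which the bounds converge to the stated expressions.
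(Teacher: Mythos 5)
Your proposal is correct and follows essentially the same route as the paper: Taylor-expand $g_2(\delta)=\ln(1+\delta)-\delta/(1+\delta)$ about $\delta=0$ to get $g_2(\delta)\approx\delta^2/2$, solve $\delta^2/2=\beta/\chi$ for $\delta^L=\delta^U=\sqrt{2\beta/\chi}$, and substitute into Eq.~\eqref{confidence interval2}. Your additional bookkeeping showing that all discarded terms are $O(\beta)$, i.e.\ bounded additive errors negligible against the retained $\sqrt{2\beta\chi}$ deviation, is a welcome refinement but does not change the argument.
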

\begin{proof}
For large values of $\chi$, $\beta/\chi$ is small, and therefore the corresponding solutions for $\delta^L$ and $\delta^U$ would be small too. In this regime, one can use the Taylor series for the log function, up to two terms, to simplify Eq.~\eqref{shortereqns2}
to obtain
\begin{equation} \label{successful analytical solutions}
\begin{aligned}
\delta^L=\delta^U=\sqrt{\frac{2\beta}{\chi}}.
\end{aligned}
\end{equation}
The conclusion will follow if we replace the above answer into Eq.~\eqref{confidence interval2}.
\end{proof}

\subsection{Simplified result when $\chi$ is large} \label{APP:Simplified case for the Chernoff bound estimation}
In Appendix~\ref{APP:Chernoff bound method}, we showed how to tightly bound the expectation value ${\bar \chi}$. The above numerical method can, however, become tedious when $\chi$ is very large. To overcome this problem, we use the symmetric form of the Chernoff bound in Eq.~\eqref{simplier Chernoff bound} and give an explicit result in the specific case of $\chi > 6\beta$.

\begin{Claim} \label{Lemma:Simplified chernof}
For $\chi>6\beta$, the lower and upper bounds of $\bar \chi$ are given by
\begin{equation} \label{confidence interval1}
\begin{aligned}
&{\mathbb{E}}^L[\chi]=\frac{\chi}{1+\delta},\\
&\mathbb{E}^U[\chi]=\frac{\chi}{1-\delta},\\
&\delta=\frac{3\beta+\sqrt{8\beta\chi+\beta^2}}{2(\chi-\beta)}.\\
\end{aligned}
\end{equation}
\end{Claim}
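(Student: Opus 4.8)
The plan is to run the inverse-formulation argument of Sec.~\ref{Sec:chernoff}, but with the symmetric Chernoff bound \eqref{simplier Chernoff bound} in place of the exact forms, so that the transcendental equations \eqref{shortereqns} collapse to a solvable quadratic. Writing $\beta=-\ln(\varepsilon/2)$, I first treat the lower bound $\mathbb{E}^L[\chi]=\chi/(1+\delta)$. Its failure event is the upper tail $\chi>(1+\delta)\bar\chi$ evaluated at the candidate mean $\bar\chi=\chi/(1+\delta)$; by the one-sided form of \eqref{simplier Chernoff bound} this probability is at most $e^{-\delta^2\bar\chi/(2+\delta)}$, and I impose that it equals the target $\varepsilon/2=e^{-\beta}$. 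Substituting $\bar\chi=\chi/(1+\delta)$ reduces the condition to
\begin{equation}
\frac{\delta^2}{2+\delta}\cdot\frac{\chi}{1+\delta}=\beta.
\end{equation}

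Next I would clear denominators to turn this into $\delta^2\chi=\beta(1+\delta)(2+\delta)$, i.e.\ the quadratic $(\chi-\beta)\delta^2-3\beta\delta-2\beta=0$, and solve by the quadratic formula. Taking the positive root (the only admissible one since we need $\delta>0$) yields $\delta=\bigl[3\beta+\sqrt{\beta^2+8\beta\chi}\bigr]/[2(\chi-\beta)]$, which is exactly the expression claimed in \eqref{confidence interval1}. This already certifies the lower bound with failure probability exactly $\varepsilon/2$.

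Then I would verify the admissible range and the validity of reusing the same $\delta$ for the upper bound $\mathbb{E}^U[\chi]=\chi/(1-\delta)$. For the range, $\delta<1$ is equivalent, after isolating the square root and squaring, to $(\chi-\beta)(\chi-6\beta)>0$; together with $\chi>\beta$ this is guaranteed precisely by the hypothesis $\chi>6\beta$, which is where that threshold enters (it also keeps $\chi-\beta>0$ and $2\chi-5\beta>0$, so the squaring step is reversible). For the upper bound, note that the \emph{tight} upper-bound deviation $\delta_U$ would instead solve $\delta_U^2\chi/[(1-\delta_U)(2+\delta_U)]=\beta$, whose solution is strictly smaller than the lower-bound $\delta$; hence reporting $\chi/(1-\delta)$ with the larger $\delta$ only inflates the upper bound and is therefore conservative. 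Equivalently, evaluating the lower-tail bound at $\bar\chi=\chi/(1-\delta)$ gives exponent $\delta^2\chi/[(1-\delta)(2+\delta)]$, which exceeds $\delta^2\chi/[(1+\delta)(2+\delta)]=\beta$ because $1-\delta<1+\delta$, so the upper bound's failure probability is strictly below $\varepsilon/2$.

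The main obstacle I anticipate is not the algebra but the bookkeeping around using a single $\delta$ for both tails: I must confirm that the value forced by the tight lower-bound equation is the \emph{larger} of the two natural deviations, so that the upper bound lands on the safe (conservative) side rather than becoming invalid, and that the monotonicity established in Appendix~\ref{App:fromX2E} still licenses reading the bound on $\bar\chi$ off the observed $\chi$ in this symmetric-bound setting. Verifying the sign conditions that justify the squaring step in the $\delta<1$ reduction is the one place where the $\chi>6\beta$ hypothesis must be invoked with care.
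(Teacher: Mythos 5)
Your proof is correct and follows essentially the same route as the paper: invert the symmetric Chernoff bound at $\bar\chi=\chi/(1\pm\delta)$, reduce the lower-bound condition to the quadratic $(\chi-\beta)\delta^2-3\beta\delta-2\beta=0$, take its positive root, and note that reusing this $\delta$ for the upper bound is conservative because the tight $\delta^U$ is strictly smaller. Your explicit check that $\delta<1$ is equivalent to $(\chi-\beta)(\chi-6\beta)>0$ (under $2\chi>5\beta$) simply fills in a detail the paper asserts without derivation.
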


{\it Proof.}
As shown in Sec.~\ref{Sec:chernoff}, we need to solve the following equations
\begin{equation} \label{symmteric:successful4}
\begin{aligned}
&2e^{-(\delta^L)^2{\bar \chi}/(2+\delta^L)}=\varepsilon,\\
&{\bar \chi}=\frac{\chi}{1+\delta^L}, \mbox{$0<\delta^L<1$},\\
\end{aligned}
\end{equation}
and
\begin{equation} \label{simplified:successful5}
\begin{aligned}
&2e^{-(\delta^U)^2{\bar \chi}/(2+\delta^U)}=\varepsilon,\\
&{\bar \chi}=\frac{\chi}{1-\delta^U}, \mbox{$0<\delta^U<1$},\\
\end{aligned}
\end{equation}
whose positive roots are obtained to be
\begin{equation} \label{Solve ELX3}
\begin{aligned}
&\delta^L=\frac{3\beta+\sqrt{8\beta\chi+\beta^2}}{2(\chi-\beta)},\\
&\delta^U=\frac{\sqrt{8\beta\chi+9\beta^2}-\beta}{2(\chi+\beta)}.\\
\end{aligned}
\end{equation}

In order to have $0<\delta^U,\delta^L< 1$, the value of $\chi$ should be larger than $6\beta$. One can in principle use the above equations for $\delta^L$ and $\delta^U$ to find the corresponding lower and upper bounds for $\bar \chi$. In Eq.~\eqref{confidence interval1}, we have used a symmetric form for the deviation parameter by choosing $\delta = \delta^L$ for both lower and upper bounds. This sysmmteric form would give us a slightly looser upper bound as it can be shown that $\delta^U$ is smaller than $\delta^L$. In the limit of $\chi \rightarrow \infty$, the above symmetric formulation would nevertheless give us the same asymptotic values as obtained in Claim A.2, which indicates that the two methodologies are more or less the same for large values of $\chi$.

%for the If we It is easy to see that $\frac{\chi}{1-\delta^U}\ge\frac{\chi}{1+\delta^L}$. Combining with $\delta^2/(2+\delta)$ is an increase function of $\delta$, $\delta^U$ is smaller than $\delta^L$. To obtain a symmetric formula, we use a looser upper bound $\frac{\chi}{1-\delta^L}$ to estimate ${\bar \chi}^U$.

%\subsection{Chernoff bound when $\chi$ is small}\label{APP:finite chi}

%\section{Failure probability of number deviation $n_\alpha$}\label{APP:Failure probability of n alpha}

\section{Random sampling}\label{APP:theta}

Here, we review the standard random sampling method used for the phase error rate estimation \cite{fung2010practical}. Suppose there are $n_x+n_z$ qubits (or basis-independent quantum states) in total. Alice and Bob randomly pick $n_x$ qubits, measured in the $X$ basis, and obtain a bit error rate of $e^{bx}$. They need to estimate the phase error rate, $e^{pz}$, for the remaining $n_z$ qubits measured in the $Z$ basis. When the data size is infinite, for basis-independent states, $e^{pz}=e^{bx}$. When statistical fluctuations are taken into account, a deviation $\theta$ is expected between the two error rates. According to the random sampling analysis, the (failure) probability for $e^{pz}\ge e^{bx}+\theta$ is given by \cite{fung2010practical}
\begin{equation} \label{Sampling:epsilon}
\begin{aligned}
Pr(e^{pz}\ge e^{bx}+\theta)\le \frac{\sqrt{n_x+n_z}}{\sqrt{e^{bx}(1-e^{bx})n_xn_z}}2^{-(n_x+n_z)\xi(\theta)},\\
\end{aligned}
\end{equation}
where $\xi(\theta)=h(e^{bx}+\theta-q^x\theta)-q^xh(e^{bx})-(1-q^x)h(e^{bx}+\theta)$ and $q^x=n_x/(n_x+n_z)$. For a given failure probability $\varepsilon$, one can then numerically find $\theta$ that satisfies

\begin{equation} \label{Sampling:epsilon:num}
\begin{aligned}
\varepsilon = \frac{\sqrt{n_x+n_z}}{\sqrt{e^{bx}(1-e^{bx})n_xn_z}}2^{-(n_x+n_z)\xi(\theta)}.\\
\end{aligned}
\end{equation}

In the decoy-state scheme considered here, we can use the above random sampling method to upper bound $\theta$, by using the following substitutions
\begin{equation} \label{Sampling:decoy}
\begin{aligned}
 e^{bx} &\rightarrow  e_1^{bxU} \\
 e^{pz} &\rightarrow e_1^{psz}  \\
 n_x &\rightarrow M_{1}^{xL}  \\
 n_z &\rightarrow M_{1}^{zsL}  \\
\end{aligned}
\end{equation}
in Eq.~\eqref{Sampling:epsilon:num}. The upper bound of the phase error rate $e_1^{psz}$ is then given by
\begin{equation} \label{upper bound Random sampling}
\begin{aligned}
e_1^{pszU}=e_1^{bxU}+\theta.
\end{aligned}
\end{equation}
Note that in order to estimate the phase error rate in the $Z$-basis signal states, we can use all the data points in the $X$ basis. That is why we use $M_{1}^{xL}$ rather than $M_{1}^{xsL}$ in Eq.~\eqref{Sampling:decoy}.

\section{Gaussian analysis} \label{App:StdErrAna}
Here, we summarize the Gaussian analysis method in Ref.~\cite{XMA:PRA:2005,ma2012statistical}, where the quantum channel is assumed to fluctuate according to a Gaussian distribution. According to the central limit theorem, a lower bound of  $y_1$, an upper bound of $e_1y_1$ and hence, an upper bound of $e_1$ can be obtained by
\begin{equation} \label{QKD-ratenew4}
\begin{aligned}
\min{~y_1}, \;\;\; s.t.,\\
(1-\frac{n_\alpha}{\sqrt{M_{a}}})Q^{a}&\le e^{-\mu_a}Y_0+\dots+e^{-\mu_a}\frac{(\mu_a)^i}{i!}Y_i +\dots\le (1+\frac{n_\alpha}{\sqrt{M^{a}}})Q^{a},\\
a\in\{s,w,v\}.
\end{aligned}
\end{equation}
\begin{equation} \label{QKD-ratenew5}
\begin{aligned}
\max{~e_1y_1}, \;\;\; s.t.,\\
(1-\frac{n_\alpha}{\sqrt{E^{a}M^{a}}})E^{a}Q^{a}&\le e^{-\mu_a}e_0Y_0+\dots+e^  {-\mu_a}\frac{(\mu_a)^i}{i!}e_iY_i +\dots\le (1+\frac{n_\alpha}{\sqrt{E^{a}M^{a}}})E^{a}Q^{a},\\
a\in\{s,w,v\}.
\end{aligned}
\end{equation}
The number of standard deviation $n_\alpha$ in Eq.~\eqref{QKD-ratenew4} is directly related to the failure probability,
\begin{equation} \label{deviation}
\begin{aligned}
1- \mathbf{erf}(n_\alpha/\sqrt{2})=\varepsilon,
\end{aligned}
\end{equation}
where $\mathbf{erf}(x)=\frac{2}{\sqrt{\pi}}\int_{0}^{x}e^{-t^2}dt$ is the error function \cite{cody1993algorithm}.

\section{Chernoff+Hoeffding method} \label{App:Curty}

In \cite{curty2014finite}, the parameter ${\bar \chi}$ is estimated by Chernoff+Hoeffding method. While in our method we use the Chernoff bound for all positive values of $\chi$, in \cite{curty2014finite}, the authors use the Hoeffding inequality when the date size is small. In this section, we denote $\mu$ to be ${\bar \chi}$. Then $\chi$ can be written as $\mu+\delta$, where $\delta\in[-\Delta,\widehat{\Delta}]$. The parameters $\varepsilon_1$, $\varepsilon_2$ and $\varepsilon_3$ are, respectively, the failure probabilities of the lower bound with the Hoeffding inequality, the lower bound estimation of the Chernoff bound, and the upper bound estimation of the Chernoff bound.

First, a general lower bound $\mu^L$ is given according to the Hoeffding inequality.
\begin{equation} \label{Hoeffding inequality}
\begin{aligned}
\mu^L=\chi-\sqrt{n\ln(1/\varepsilon_1)/2},\\
\end{aligned}
\end{equation}
where $n$ is the total number of random variables $\chi_i$ and $\chi=\sum^{n}_{i=1}\chi_i$.
This lower bound is used to determine the estimated means of the Chernoff+Hoeffding method.

With the upper bound $\mu^L$ in Eq.~\eqref{Hoeffding inequality}, the following three tests are performed:
\begin{enumerate}
\item{test1:}
    $(2\varepsilon_2^{-1})^{1/\mu^L}\le e^{(4/4\sqrt{2})^2}$
\item{test2:}
$(\varepsilon_3^{-1})^{1/\mu^L}< e^{1/3}$
\item{test3:}
$((\varepsilon_3)^{1/\mu^L})< e^{[(2e-1)/2]^2}$
\end{enumerate}

According to the results of these tests, the upper bound and lower bound are estimated with different means. If a test is fulfilled, the according bound can be calculated with Chernoff bound, which gives a tighter estimation. When no tests is fulfilled, the according bound have to be calculated by the looser Hoeffding inequality.

When estimating the upper bound, we denote that $\mu^U=\chi+\Delta$. According to the result of test$1$, the value of $\Delta$ is given by,
\begin{enumerate}
\item
when test1 is fulfilled, $\Delta=g(\chi,\varepsilon_2^4/16)$, where $g(x,y)=\sqrt{2x\ln(y^{-1})}$;

\item
when test1 is not fulfilled, $\Delta=\sqrt{n/2\ln(1/\varepsilon_2)}$
\end{enumerate}

When considering the lower bound, we denote that $\mu^L=\chi-\widehat{\Delta}$. According to the results of test$2$ and test$3$, the value of $\widehat{\Delta}$ is given by
\begin{enumerate}
\item
When test$2$ is fulfilled, $\widehat{\Delta}=g(\chi,\varepsilon_3^{3/2})$
\item
When test$2$ is not fulfilled, but test3 is fulfilled, $\widehat{\Delta}=g(\chi,\varepsilon_3^2)$
\item
When test$3$ is not fulfilled (test2 is also not fulfilled), $\widehat{\Delta}=\sqrt{n/2\ln(1/\varepsilon_3)}$
\end{enumerate}
\begin{corollary} \label{corollary:Hoeffding chernof:bound}
When all of the tests are fulfilled, $\varepsilon_3=\varepsilon_2=\varepsilon/2$, and $\chi\rightarrow\infty$, the confidence interval of $\bar \chi$ in Eq.~\eqref{confidence interval1} is given by,

\begin{equation} \label{confidence Hoeffding interval1new}
\begin{aligned}
&{\mathbb{E}}^L[\chi]=\chi(1 -\sqrt{\frac{3\beta}{\chi}}), {\mathbb{E}}^U[\chi]=\chi(1 +2\sqrt{\frac{2\beta-\ln2}{\chi}}).\\
\end{aligned}
\end{equation}
\end{corollary}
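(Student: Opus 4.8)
The plan is to obtain this confidence interval by directly specializing the Chernoff+Hoeffding construction recalled in Appendix~\ref{App:Curty} to the stated regime, rather than re-deriving any inequality from scratch. Recall from that appendix that the upper and lower estimates of $\bar\chi=\mu$ are $\mathbb{E}^U[\chi]=\mu^U=\chi+\Delta$ and $\mathbb{E}^L[\chi]=\mu^L=\chi-\widehat{\Delta}$, and that when the relevant tests are fulfilled the two deviations take the Chernoff forms $\Delta=g(\chi,\varepsilon_2^4/16)$ and $\widehat{\Delta}=g(\chi,\varepsilon_3^{3/2})$, with $g(x,y)=\sqrt{2x\ln(y^{-1})}$. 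Thus the whole statement reduces to two things: (i) checking that, in the limit $\chi\to\infty$, all three tests are indeed fulfilled so that these Chernoff forms apply, and (ii) inserting $\varepsilon_2=\varepsilon_3=\varepsilon/2$ and simplifying the logarithms through $\beta=-\ln(\varepsilon/2)$.

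First I would dispatch the test-fulfillment condition, which is the one step that actually uses the hypothesis. Each test in Appendix~\ref{App:Curty} has the shape $(\text{const})^{1/\mu^L}\le c$ (or $<c$) with a constant $c>1$ independent of $\chi$. Since the Hoeffding lower bound gives $\mu^L=\chi-\sqrt{n\ln(1/\varepsilon_1)/2}$, with the subtracted term of order $\sqrt{\chi}$, we have $\mu^L\to\infty$ as $\chi\to\infty$; hence each left-hand side $(\text{const})^{1/\mu^L}\to1<c$, so all three tests hold once $\chi$ is large enough. I would emphasize that this is the \emph{only} place where $\chi\to\infty$ is genuinely needed: the deviation formulas themselves are exact closed forms, each scaling as $\sqrt{\chi}$, so no further asymptotic truncation of $\Delta$ or $\widehat{\Delta}$ is required, and there is no $\mu^L$-versus-$\chi$ ambiguity because the appendix evaluates $g$ at $\chi$ directly.

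With the tests secured, the remainder is substitution. For the lower bound I would write $\widehat{\Delta}=\sqrt{2\chi\ln(\varepsilon_3^{-3/2})}=\sqrt{3\chi\ln(1/\varepsilon_3)}$ and use $\ln(1/\varepsilon_3)=\ln(2/\varepsilon)=\beta$ to get $\widehat{\Delta}=\sqrt{3\beta\chi}$, so that $\mathbb{E}^L[\chi]=\chi-\sqrt{3\beta\chi}=\chi\bigl(1-\sqrt{3\beta/\chi}\bigr)$. For the upper bound I would expand $\ln(16/\varepsilon_2^4)=\ln16-4\ln\varepsilon_2$, apply $\ln16=4\ln2$ together with $-\ln\varepsilon_2=\beta$ to reduce $\Delta=\sqrt{2\chi\ln(16/\varepsilon_2^4)}$ to a single $\sqrt{\chi}$-scaled term, and then factor out $\chi$ to reach the stated form $\mathbb{E}^U[\chi]=\chi\bigl(1+2\sqrt{(2\beta-\ln2)/\chi}\bigr)$.

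I expect the main obstacle to be purely the bookkeeping in this last upper-bound simplification: keeping careful track of the factors of $2$ and the $\ln2$ contributions introduced by the $\varepsilon_2^4/16$ argument, where an arithmetic slip is easy, and confirming that the terms collect into exactly the quoted coefficient. Everything else is routine. Finally I would record the payoff of the computation by comparing with the author's own asymptotic interval $\chi\bigl(1\pm\sqrt{2\beta/\chi}\bigr)$ from Claim~\ref{corollary:chernof:bound}: the coefficients $\sqrt{3\beta}$ and $2\sqrt{2\beta-\ln2}$ obtained here are strictly larger, which is precisely the quantitative sense in which the Chernoff+Hoeffding interval is looser and thus accounts for the performance gap observed in the numerical results.
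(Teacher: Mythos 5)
Your route is the same one the paper takes: its own proof of Corollary~\ref{corollary:Hoeffding chernof:bound} consists of nothing more than substituting $\widehat{\Delta}=g(\chi,\varepsilon_3^{3/2})$ and $\Delta=g(\chi,\varepsilon_2^4/16)$ and asserting the two simplified forms. Your preliminary check that the three tests are automatically satisfied as $\chi\to\infty$ (because $\mu^L\to\infty$ so each left-hand side tends to $1$) is a harmless addition, since the hypothesis already assumes the tests hold. Your lower-bound computation is correct and complete: $\widehat{\Delta}=\sqrt{2\chi\cdot\tfrac32\ln(1/\varepsilon_3)}=\sqrt{3\beta\chi}$ with $\beta=-\ln(\varepsilon/2)$, giving $\mathbb{E}^L[\chi]=\chi\bigl(1-\sqrt{3\beta/\chi}\bigr)$.

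The gap sits exactly in the step you deferred as ``bookkeeping where an arithmetic slip is easy'' and then did not carry out. Doing it with the formulas as stated in Appendix~\ref{App:Curty}: $\ln\bigl(16/\varepsilon_2^4\bigr)=\ln 16-4\ln\varepsilon_2=4\ln 2+4\beta$, hence $\Delta=\sqrt{2\chi\,(4\beta+4\ln 2)}=2\sqrt{\chi\,(2\beta+2\ln 2)}$, i.e. $\mathbb{E}^U[\chi]=\chi\bigl(1+2\sqrt{(2\beta+2\ln 2)/\chi}\bigr)$. This is \emph{not} the quoted $\chi\bigl(1+2\sqrt{(2\beta-\ln 2)/\chi}\bigr)$; the two expressions under the radical differ by $3\ln 2$. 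To land on $2\beta-\ln 2$ one would need $\Delta=g(\chi,4\varepsilon_2^4)$ rather than $g(\chi,\varepsilon_2^4/16)$, so either the corollary's displayed constant or the appendix's transcription of the Chernoff deviation from \cite{curty2014finite} is inconsistent with the other. By asserting the quoted form without performing the substitution, your proposal inherits this inconsistency instead of exposing it; a complete proof must either derive $2\beta+2\ln 2$ and correct the statement, or identify which deviation formula is actually intended. Your closing qualitative comparison with Claim~\ref{corollary:chernof:bound} survives either way, since both $2\sqrt{2\beta+2\ln 2}$ and $2\sqrt{2\beta-\ln 2}$ exceed $\sqrt{2\beta}$ for the relevant (small) values of $\varepsilon$, so the Chernoff+Hoeffding interval remains strictly looser in the $\chi\to\infty$ limit.
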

{\it Proof.}\\
When all of the tests are fulfilled, we know that:
\begin{equation} \label{confidence Hoeffding interval1new1}
\begin{aligned}
&{\bar \chi}^L(\chi)=\chi-g(\chi,\varepsilon_3^{3/2})=\chi(1 -\sqrt{\frac{3\beta}{\chi}}),\\
&{\bar \chi}^U(\chi)=\chi+g(\chi,\varepsilon_2^4/16)=\chi(1 +2\sqrt{\frac{2\beta-\ln2}{\chi}}).\\
\end{aligned}
\end{equation}

%%%%%%%%%%%%%%%%%%%%%%%%%%%%%%%%%%%%%%%
% choose a stylel
%\bibliographystyle{ieeetr}
%\bibliographystyle{unsrt}
\bibliographystyle{apsrev4-1}
%%%%%%%%%%%%%%%%%%%%%%%%%%%%%%%%%%%%%%%%

%%%%%%%%%%%%%%%%%%%%%%%%%%%%%%%%%%%%%%%%
% choose a .bib file
\bibliography{BibChernoff}
%%%%%%%%%%%%%%%%%%%%%%%%%%%%%%%%%%%%%%%%

%\nocite{*}

\end{document}